\newcommand{\fref}[1]{Fig.~\ref{#1}}
\newcommand{\tref}[1]{Table~\ref{#1}}
\newcommand{\sref}[1]{Section~\ref{#1}}
\newenvironment{algo}[1][!htbp]
  {
   \begin{algorithm}[#1]%
  }{\end{algorithm}}
\newenvironment{proced}[1][!h]
  {
   \begin{algorithm}[#1]%
  }{\end{algorithm}}
\newenvironment{subproc}[1][!htbp]
  {
   \begin{algorithm}[#1]%
  }{\end{algorithm}}
\newenvironment{subproc2}[1][!h]
  {
   \begin{algorithm}[#1]%
  }{\end{algorithm}}
\providecommand{\U}[1]{\protect\rule{.1in}{.1in}}
\newtheorem{theorem}{Theorem}
\newtheorem{lemma}{Lemma}
\newenvironment{proof}[1][Proof]{\textbf{#1.} }{\ \rule{0.5em}{0.5em}}
\begin{document}

\title{Quantum Circuit Design for Objective Function Maximization in Gate-Model Quantum Computers}
\author{Laszlo Gyongyosi\thanks{School of Electronics and Computer Science, University of Southampton, Southampton SO17 1BJ, U.K., and Department of Networked Systems and Services, Budapest University of Technology and Economics, 1117 Budapest, Hungary, and MTA-BME Information Systems Research Group, Hungarian Academy of Sciences, 1051 Budapest, Hungary.}
\and Sandor Imre\thanks{Department of Networked Systems and Services, Budapest University of Technology and Economics, 1117 Budapest, Hungary.}}\date{}

\maketitle
\begin{abstract}
Gate-model quantum computers provide an experimentally implementable architecture for near term quantum computations. To design a reduced quantum circuit that can simulate a high complexity reference quantum circuit, an optimization should be taken on the number of input quantum states, on the unitary operations of the quantum circuit, and on the number of output measurement rounds. Besides the optimization of the physical layout of the hardware layer, the quantum computer should also solve difficult computational problems very efficiently. To yield a desired output system, a particular objective function associated with the computational problem fed into the quantum computer should be maximized. The reduced gate structure should be able to produce the maximized value of the objective function. These parallel requirements must be satisfied simultaneously, which makes the optimization difficult. Here, we demonstrate a method for designing quantum circuits for gate-model quantum computers and define the Quantum Triple Annealing Minimization (QTAM) algorithm. The aim of QTAM is to determine an optimal reduced topology for the quantum circuits in the hardware layer at the maximization of the objective function of an arbitrary computational problem.
\end{abstract}

\section{Introduction}
\label{sec1}
According to Moore’s law \cite{ref31}, traditional computer architectures will reach their physical limits in the near future. Quantum computers \cite{ref1, ref2, ref3, ref4, ref5, ref6, ref7,ref8, ref9, ref10, ref11, ref12, ref13, ref14, ref15, ref16, ref17, ref18, ref19, ref20, ref21, ref22,ref23} provide a tool to solve problems more efficiently than ever would be possible with traditional computers \cite{ref1, ref2, ref3, ref4, ref5, ref6, ref7,ref8, ref9, ref10, ref11}. The power of quantum computing is based on the fundamentals of quantum mechanics. In a quantum computer, information is represented by quantum information, and information processing is achieved by quantum gates that realize quantum operations \cite{ref1, ref2, ref3, ref4, ref5, ref6, ref7,ref8, ref9, ref10, ref11,p1,p2,p3}. These quantum operations are performed on the quantum states, which are then outputted and measured in a measurement phase. The measurement process is applied to each quantum state where the quantum information conveyed by the quantum states is converted into classical bits. Quantum computers have been demonstrated in practice \cite{ref1, ref2, ref3, ref4, ref5, ref6, ref8, ref9}, and several implementations are currently in progress \cite{ref1, ref2, ref3, ref4, ref5, ref6, ref7,ref8, ref9, ref10, ref11, ref16, ref17, ref18, ref19}. 

In the physical layer of a gate-model quantum computer, the device contains quantum gates, quantum ports (of quantum gates), and quantum wires for the quantum circuit\footnote{The term ``quantum circuit'', in general, refers to software, not hardware; it is a description or prescription for what quantum operations should be applied when and does not refer to a physically implemented circuit analogous to a printed electronic circuit. In our setting, it refers to the hardware layer.}. In contrast to traditional automated circuit design \cite{ref24, ref25, ref26, ref27, ref28, ref29, ref30}, a quantum system cannot participate in more than one quantum gate simultaneously. As a corollary, the quantum gates of a quantum circuit are applied in several rounds in the physical layer of the quantum circuit \cite{ref1, ref2, ref3, ref4, ref5, ref6, ref7,ref8, ref9, ref10, ref11, ref16, ref17, ref18, ref19}.

The physical layout design and optimization of quantum circuits have different requirements with several open questions and currently represent an active area of study \cite{ref1, ref2, ref3, ref4, ref5, ref6, ref7,ref8, ref9, ref10, ref11, ref16, ref17, ref18, ref19}. Assuming that the goal is to construct a reduced quantum circuit that can simulate the original system, the reduction process should be taken on the number of input quantum states, gate operations of the quantum circuit, and the number of output measurements. Another important question is the maximization of objective function associated with an arbitrary computational problem that is fed into the quantum computer. These parallel requirements must be satisfied simultaneously, which makes the optimization procedure difficult and is an emerging issue in present and future quantum computer developments. 

In the proposed QTAM method, the goal is to determine a topology for the quantum circuits of quantum computer architectures that can solve arbitrary computational problems such that the quantum circuit is minimized in the physical layer, and the objective function of an arbitrary selected computational problem is maximized. The physical layer minimization covers the simultaneous minimization of the quantum circuit area (quantum circuit height and depth of the quantum gate structure, where the depth refers to the number of time steps required for the quantum operations making up the circuit to be run on quantum hardware), the total area of the quantum wires of the quantum circuit, the maximization of the objective function, and the minimization of the required number of input quantum systems and output measurements. An important aim of the physical layout minimization is that the resulting quantum circuit should be identical to a high complexity reference quantum circuit (i.e., the reduced quantum circuit should be able to simulate a nonreduced quantum circuit). 

The minimization of the total quantum wire length in the physical layout is also an objective in QTAM. It serves to improve the processing in the topology of the quantum circuit. However, besides the minimization of the physical layout of the quantum circuit, the quantum computer also has to solve difficult computational problems very efficiently (such as the maximization of an arbitrary combinatorial optimization objective function \cite{ref16, ref17, ref18, ref19}. To achieve this goal in the QTAM method, we also defined an objective function that provides the maximization of objective functions of arbitrary computational problems. The optimization method can be further tuned by specific input constraints on the topology of the quantum circuit (paths in the quantum circuit, organization of quantum gates, required number of rounds of quantum gates, required number of measurement operators, Hamiltonian minimization, entanglement between quantum states, etc.) or other hardware restrictions of quantum computers, such as the well-known \textit{no-cloning theorem} \cite{ref22}. The various restrictions on quantum hardware, such as the number of rounds required to be integrated into the quantum gate structure, or entanglement generation between the quantum states are included in the scheme. These constraints and design attributes can be handled in the scheme through the definition of arbitrary constraints on the topology of the quantum circuit, or by constraints on the computational paths.  

The combinatorial objective function is measured on a computational basis, and an objective function value is determined from the measurement result to quantify the current state of the quantum computer. Quantum computers can be used for combinatorial optimization problems. These procedures aim to use the quantum computer to produce a quantum system that is dominated by computational basis states such that a particular objective function is maximized. 

Recent experimental realizations of quantum computers are qubit architectures \cite{ref1, ref2, ref3, ref4, ref5, ref6, ref7,ref8, ref9, ref10, ref11, ref12, ref13, ref14, ref15, ref16, ref17, ref18, ref19}, and the current quantum hardware approaches focus on qubit systems (i.e., the dimension $d$ of the quantum system is two, $d=2$). However, while the qubit layout is straightforwardly inspirable by ongoing experiments, the method is developed for arbitrary dimensions to make it applicable for future implementations. Motivated by these assumptions, we therefore would avoid the term `qubit' in our scheme to address the quantum states and instead use the generalized term, `quantum states' throughout, which refers to an arbitrary dimensional quantum system. We also illustrate the results through superconducting quantum circuits \cite{ref1, ref2, ref3, ref4, ref5}; however, the framework is general and flexible, allowing a realization for near term gate-model quantum computer implementations.

The novel contributions of this paper are as follows:
\begin{itemize}
\item \textit{We define a method for designing quantum circuits for gate-model quantum computers.} 
\item \textit{We conceive the QTAM algorithm, which provides a quantum circuit minimization on the physical layout (circuit depth and area), quantum wire length minimization, objective function maximization, input size and measurement size minimization for quantum circuits.} 
\item \textit{We define a multilayer structure for quantum computations using the hardware restrictions on the topology of gate-model quantum computers.} 
\end{itemize}

This paper is organized as follows. In \sref{relw} the related works are summarized. \sref{sec2} proposes the system model. In \sref{sec4} the details of the optimization method are discussed, while \sref{sec5} studies the performance of the model. Finally, \sref{sec6} concludes the paper. Supplemental information is inlucded in the Appendix.

\section{Related Works}
\label{relw}
The related works are summarized as follows. 

A strong theoretical background on the logical model of gate-model quantum computers can be found in \cite{ref17,ref16,ref18}. In \cite{ref7}, the model of a gate-model quantum neural network model is defined.

In \cite{refa1}, the authors defined a hierarchical approach to computer-aided design of quantum circuits. The proposed model was designed for the synthesis of permutation class of quantum logic circuits. The method integrates evolutionary and genetic approaches to evolve arbitrary quantum circuit specified by a target unitary matrix. Instead of circuit optimization, the work focuses on circuit synthesis.

In \cite{refa2}, the authors propose a simulation of quantum circuits by low-rank stabilizer decompositions. The work focuses on the problem of simulation of quantum circuits containing a few non-Clifford gates. The framework focuses on the theoretical description of the stabilizer rank. The authors also derived the simulation cost.

A method for the designing of a T-count optimized quantum circuit for integer multiplication with $4n+1$ qubits was defined in \cite{int}. The T-count \cite{tc} measures the number of T-gates, and has a relevance because of the implementation cost of a T gate is high. The aim of the T-count optimization is to reduce the number of T-gates without substantially increasing the number of qubits. The method also applied for quantum circuit designs of integer division \cite{int2}. In the optimization takes into consideration both the T-count and T-depth, since T-depth is also an important performance measure to reduce the implementation costs. Another method for designing of reversible floating point divider units was proposed in \cite{div}.

In \cite{logic}, a methodology for quantum logic gate construction was defined. The main purpose of the scheme was to construct fault-tolerant quantum logic gates with a simple technique. The method is based on the quantum teleportation method \cite{tel}.

A method for the synthesis of depth-optimal quantum circuits was defined in \cite{depth}. The aim of the proposed algorithm is to compute the depth-optimal decompositions of logical operations via an application of the so-called meet-in-the-middle technique. The authors also applied their scheme for the factorizations of some quantum logical operations into elementary gates in the in the Clifford+T set.

A framework to the study the compilation and description of fault-tolerant, high level quantum circuits is proposed in \cite{ft}. The authors defined a method to convert high level quantum circuits consisting of commonly used gates into a form employing all decompositions and ancillary protocols needed for fault-tolerant error correction. The method also represents a useful tool for quantum hardware architectures with topological quantum codes.

The Quantum Approximate Optimization Algorithm (QAOA) optimization algorithm is defined in \cite{ref16}. The QAOA has been defined to evalute approximate solutions for combinatorial optimization problems fed into the quantum computer. 

Relevant attributes of the QAOA algorithm are studied in \cite{refa3}.

In \cite{refa4}, the authors analyzed the performance of the QAOA algorithm on near-term gate-model quantum devices. 

The implementation of QAOA with parallelizable gates is studied in \cite{refa5}.

In \cite{refa6} the performance of QAOA is studied on different problems. The analysis covers the MaxCut combinatorial optimization problem, and the problem of quantum circuit optimizations on a classical computer using automatic differentiation and stochastic gradient descent. The work also revealed that QAOA can exceed the performance of a classical polynomial time algorithm (Goemans-Williamson algorithm \cite{refgw}) with modest circuit depth. The work also concluded that the performance of QAOA with fixed circuit depth is insensitive to problem size.

In \cite{refa7}, the authors studied the problem of ultrafast state preparation via the QAOA with long range interactions. The works provides an application for the QAOA in near-term gate-model quantum devices. As the authors concluded, the QAOA-based approach leads to an extremely efficient state preparation, for example the method allows us to prepare Greene-Horne-Zeilinger (GHZ) states with $\mathcal{O}\left( 1 \right)$ circuit depth. The results were also demonstrated by several other examples. 

Another experimental approach for the implementation of qubit entanglement and parallel logic operations with a superconducting circuit was presented in \cite{song}. In this work, the authors generated entangled GHZ states with up to 10 qubits connecting to a bus resonator in a superconducting circuit. In the proposed implementation, the resonator-mediated qubit-qubit interactions are used to control the entanglement between the qubits and to operate on different pairs in parallel.

A review on the noisy intermediate-scale quantum (NISQ) era can be found in \cite{refpr}. 

The subject of quantum computational supremacy is discussed in \cite{refha, aar}.

For a survey on the attributes of quantum channels, see \cite{ref11}, a survey on quantum computing technology is included in \cite{refsur}.

\section{System Model}
\label{sec2}
The simultaneous physical-layer minimization and the maximization of the objective function are achieved by the Quantum Triple Annealing Minimization (QTAM) algorithm. The QTAM algorithm utilizes the framework of simulated annealing (SA)  \cite{ref24, ref25, ref26, ref27, ref28, ref29, ref30}, which is a stochastic point-to-point search method. 

The procedure of the QTAM algorithm with the objective functions are depicted in \fref{fig1}. The detailed descriptions of the methods and procedures are included in the next sections. 

 \begin{center}
\begin{figure}[h!]
%\vspace{-0.4cm}
\begin{center}
\includegraphics[angle = 0,width=1\linewidth]{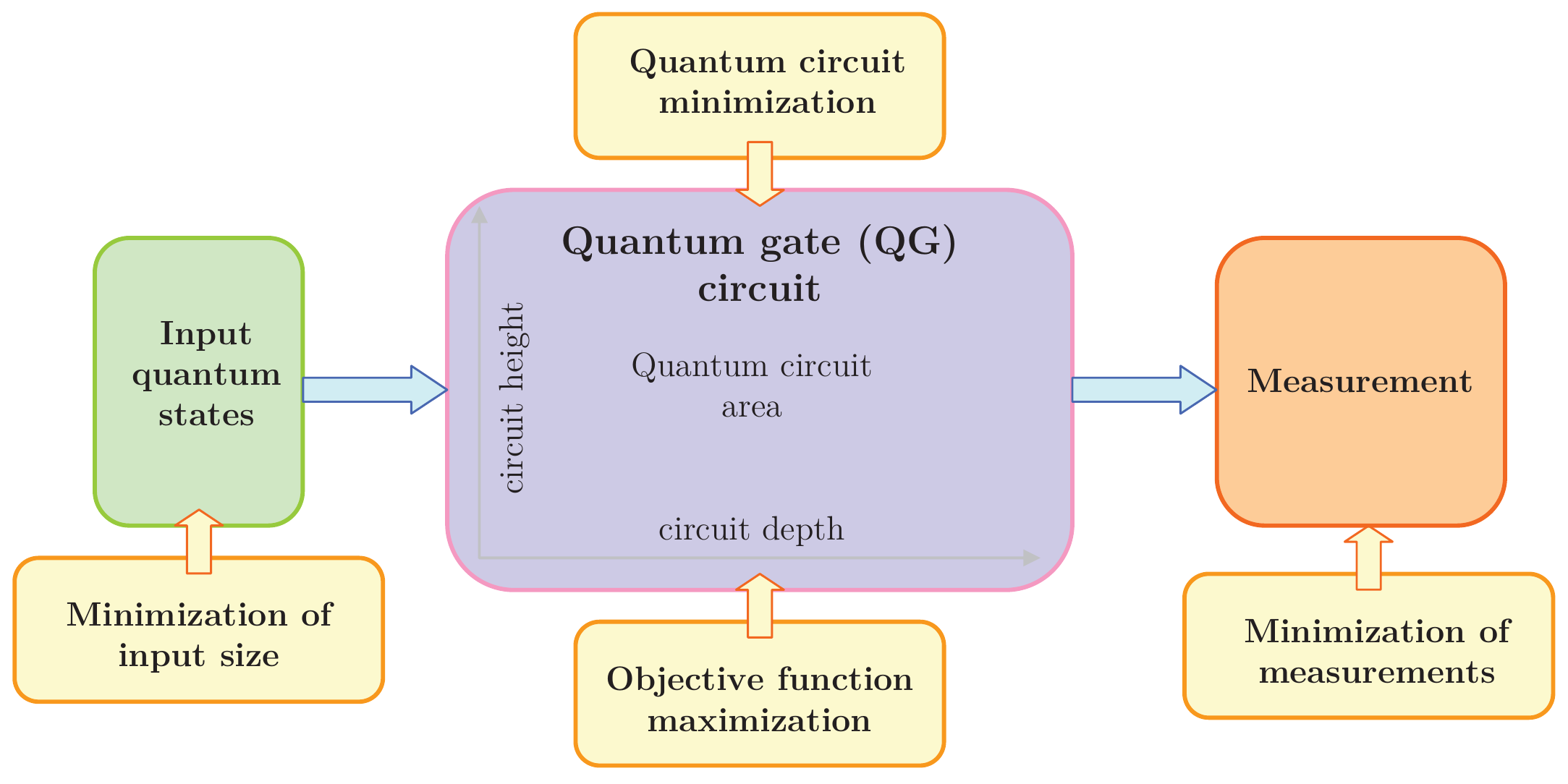}
\caption{The QTAM method for quantum computers. The quantum gate ($QG$) circuit computation model consists of an input array of $n$ quantum states (depicted by the green box), layers of quantum gates integrated into a quantum circuit (depicted by the purple box), and a measurement phase (depicted by the orange box). The quantum gates that act on the quantum states formulate a quantum circuit with a given circuit height and depth. The area of the quantum circuit is minimized by objective function $F_{{\rm 1}} $, while the total quantum wire area of the quantum circuit is minimized by $F_{{\rm 2}} $ ($F_{{\rm 1}} \wedge F_{{\rm 2}} $ is referred via the quantum circuit minimization). The result of the minimization is a quantum circuit of quantum gates with minimized quantum circuit area, minimized total quantum wire length, and a minimized total Hamiltonian operator. The maximization of a corresponding objective function of arbitrary selected computational problems for the quantum computer is achieved by $F_{{\rm 3}} $ (referred via the objective function maximization). Objective functions $F_{{\rm 4}} $ and $F_{{\rm 5}} $ are defined for the minimization of the number of quantum states (minimization of input size), and the total number of measurements (minimization of measurements).} 
 \label{fig1}
 \end{center}
\end{figure}
\end{center}

\subsection{Computational Model}
 By theory, in an SA-based procedure a current solution $s_{A} $ is moved to a neighbor $s_{B} $, which yields an acceptance probability \cite{ref24, ref25, ref26, ref27, ref28, ref29, ref30} 
\begin{equation} \label{eq1} 
{\Pr }\left(f\left(s_{A} \right),f\left(s_{B} \right)\right)=\frac{{\rm 1}}{{\rm 1}+e^{\left(\frac{f\left(s_{A} \right)-f\left(s_{B} \right)}{Tf\left(s_{A} \right)} \right)} } , 
\end{equation} 
 where $f\left(s_{A} \right)$ and $f\left(s_{B} \right)$ represent the relative performances of the current and neighbor solutions, while $T$ is a control parameter, $T\left(t\right)=T_{\max } {\rm exp}\left(-R\left(t/k\right)\right)$, where $R$ is the temperature decreasing rate, $t$ is the iteration counter, $k$ is a scaling factor, while $T_{\max } $ is an initial temperature.

Since SA is a probabilistic procedure it is important to minimize the acceptance probability of unfavorable solutions and avoid getting stuck in a local minima.

Without loss of generality, if $T$ is low, \eqref{eq1} can be rewritten in function of $f\left(s_{A} \right)$ and $f\left(s_{B} \right)$ as 
\begin{equation} \label{eq2} 
{\Pr }\left(f\left(s_{A} \right),f\left(s_{B} \right)\right)=\left\{\begin{split} {{\rm 1,if\text{ }}f\left(s_{A} \right)>f\left(s_{B} \right)} \\ {{\rm 0,if\text{ }}f\left(s_{A} \right)\le f\left(s_{B} \right)} \end{split}\right. . 
\end{equation} 
 In the QTAM algorithm, we take into consideration that the objectives, constraints, and other functions of the method, by some fundamental theory, are characterized by different magnitude ranges \cite{ref24, ref25, ref26, ref27, ref28, ref29, ref30}. To avoid issues from these differences in the QTAM algorithm we define three annealing temperatures, $T_{f} \left(t\right)$ for objectives, $T_{g} \left(t\right)$ for constraints and $T_{c} \left(t\right)$ for the probability distribution closeness (distance of the output distributions of the reference quantum circuit and the reduced quantum circuit).

In the QTAM algorithm, the acceptance probability of a new solution $s_{B} $ at a current solution $s_{A} $ is as 
\begin{equation} \label{eq3} 
{\Pr }\left(s_{A} ,s_{B} \right)=\frac{{\rm 1}}{{\rm 1}+e^{\tilde{d}\left(f\right)T_{f} \left(t\right)} e^{\tilde{d}\left(g\right)T_{g} \left(t\right)} e^{\tilde{d}\left(c\right)T_{c} \left(t\right)} } , 
\end{equation} 
 where $\tilde{d}\left(f\right)$, $\tilde{d}\left(g\right)$ and $\tilde{d}\left(c\right)$ are the average values of objective, constraint and distribution closeness domination, see Algorithm 1.

To aim of the QTAM algorithm is to minimize the cost function 
\begin{equation} \label{eq4} 
\min f\left({\rm x}\right)=\alpha _{{\rm 1}} F_{{\rm 1}} \left({\rm x}\right)+\ldots +\alpha _{N_{obj} } F_{N_{obj} } \left({\rm x}\right)+F_{s} , 
\end{equation} 
 where ${\rm x}$ is the vector of design variables, while $\alpha $ is the vector of weights, while $N_{obj} $ is the number of primarily objectives. Other $i$ secondary objectives (aspect ratio of the quantum circuit, overlaps, total net length, etc.) are minimized simultaneously via the single-objective function $F_{s} $ in \eqref{eq4} as 
\begin{equation} \label{eq5} 
F_{s} =\sum _{i} \alpha _{i} F_{i} \left(x\right). 
\end{equation} 

\subsection{Objective Functions}
We defined $N_{obj} =5$ objective functions for the QTAM algorithm. Objective functions $F_{{\rm 1}} $ and $F_{{\rm 2}} $ are defined for minimization of $QG$ quantum circuit in the physical layer. The aim of objective function $F_{{\rm 1}} $ is the minimization of the $A_{QG} $ quantum circuit area of the $QG$ quantum gate structure, 
\begin{equation} \label{eq6} 
F_{{\rm 1}} :\min \left(A_{QG} \right)=\min \left(H'_{QG} \cdot D'_{QG} \right), 
\end{equation} 
 where $H'_{QG} $ is the optimal circuit height of $QG$, while $D'_{QG} $ is the optimal depth of $QG$.

Focusing on superconducting quantum circuits \cite{ref1, ref2, ref3, ref4, ref5}, the aim of $F_{{\rm 2}} $ is the physical layout minimization of the $w_{QG} $ total quantum wire area of $QG$, as 
\begin{equation} \label{eq7} 
F_{{\rm 2}} :w_{QG} =\min \sum _{k=1}^{h} \left(\sum _{i=1}^{p} \sum _{j=1}^{q} \ell _{ij} \cdot \delta _{ij} \left(\psi _{ij} \right)\right), 
\end{equation} 
 where $h$ is the number of nets of the $QG$ circuit, $p$ is the number of quantum ports of the $QG$ quantum circuit considered as sources of a condensate wave function amplitude \cite{ref1, ref2, ref3, ref4, ref5}, and $q$ the number of quantum ports considered as sinks of a condensate wave function amplitude, $\ell _{ij} $ is the length of the quantum wire $ij$, $\delta _{ij} $ is the effective width of the quantum wire $ij$, while $\psi _{ij} $ is the (root mean square) condensate wave function amplitude \cite{ref1, ref2, ref3, ref4, ref5} associated to the quantum wire $ij$. 

Objective function $F_{{\rm 3}} $ is defined for the maximization of the expected value of an objective function $C_{L} (\vec{\Phi })$ as  
\begin{equation} \label{eq8} 
F_{3} :\max {C_{L} (\vec{\Phi })}=\max {\langle \vec{\Phi }|C|\vec{\Phi }\rangle }, 
\end{equation} 
 where $C$ is an objective function, $\vec{\Phi }$ is a collection of $L$ parameters 
\begin{equation} \label{eq9} 
\vec{\Phi }=\Phi _{{\rm 1}} ,\ldots ,\Phi _{L}  
\end{equation} 
 such that with $L$ unitary operations, state $|\vec{\Phi }\rangle $ is evaluated as 
\begin{equation} \label{eq10} 
|\vec{\Phi }\rangle =U_{L} \left(\Phi _{L} \right),\ldots ,U_{{\rm 1}} \left(\Phi _{{\rm 1}} \right)\left| \varphi \right\rangle  , 
\end{equation} 
 where $U_{i} $ is an $i$-th unitary that depends on a set of parameters $\Phi _{i} $, while $\left| \varphi \right\rangle  $ is an initial state. Thus the goal of $F_{{\rm 3}} $ is to determine the $L$ parameters of $\vec{\Phi }$ (see \eqref{eq9}) such that $\langle \vec{\Phi }|C|\vec{\Phi }\rangle $ is maximized.

Objective functions $F_{{\rm 4}} $ and $F_{{\rm 5}} $ are defined for the minimization of the number of input quantum states and the number of required measurements. The aim of objective function $F_{{\rm 4}} $ is the minimization of the number of quantum systems on the input of the $QG$ circuit, 
\begin{equation} \label{eq11} 
F_{{\rm 4}} :\min \left(n\right). 
\end{equation} 
 The aim of objective function $F_{{\rm 5}} $ is the minimization of the total number of measurements in the $M$ measurement block, 
\begin{equation} \label{eq12} 
F_{{\rm 5}} :\min \left(m\right)=\min {\left(N_{M} \left|M\right|\right)}, 
\end{equation} 
 where $m=N_{M} \left|M\right|$, where $N_{M} $ is the number of measurement rounds, $\left|M\right|$ is the number of measurement gates in the $M$ measurement block. 

\subsection{Constraint Violations}
 The optimization at several different objective functions results in different Pareto fronts \cite{ref24, ref25, ref26, ref27} of placements of quantum gates in the physical layout. These Pareto fronts allow us to find feasible tradeoffs between the optimization objectives of the QTAM method. The optimization process includes diverse objective functions, constraints, and optimization criteria to improve the performance of the quantum circuit and to take into consideration the hardware restrictions of quantum computers. In the proposed QTAM algorithm the constraints are endorsed by the modification of the Pareto dominance \cite{ref24, ref25, ref26, ref27} values by the different sums of constraint violation values. We defined three different constraint violation values.
 
\subsubsection{Distribution Closeness Dominance}
 In the QTAM algorithm, the Pareto dominance is first modified with the sum of distribution closeness violation values, denoted by $c_{s} \left(\cdot \right)$. The aim of this iteration is to support the closeness of output distributions of the reduced quantum circuit $QG$ to the output distribution of the reference quantum circuit $QG_{R} $.

Let $P_{QG_{R} } $ the output distribution after the $M$ measurement phase of the reference (original) quantum circuit $QG_{R} $ to be simulated by $QG$, and let $Q_{QG} $ be the output distribution of the actual, reduced quantum circuit $QG$. The distance between the quantum circuit output distributions $P_{QG_{R} } $ and $Q_{QG} $ (distribution closeness) is straightforwardly yielded by the relative entropy function, as 
\begin{equation} \label{eq13} 
D\left(\left. P_{QG_{R} } \right\| Q_{QG} \right)=\sum _{i} P_{QG_{R} } \left(i\right)\log _{2} \frac{P_{QG_{R} } \left(i\right)}{Q_{QG} \left(i\right)} . 
\end{equation} 
 For two solutions $x$ and $y$, the $d_{x,y} \left(c\right)$ distribution closeness dominance function is defined as 
\begin{equation} \label{eq14} 
d_{x,y} \left(c\right)=c_{s} \left(x\right)-c_{s} \left(y\right), 
\end{equation} 
 where $c_{s} \left(\cdot \right)$ is evaluated for a given solution $z$ as 
\begin{equation} \label{eq15} 
c_{s} \left(z\right)=\sum _{i=1}^{N_{v} } v_{i}^{c} , 
\end{equation} 
 where $v_{i}^{c} $ is an $i$-th distribution closeness violation value, $N_{v} $ is the number of distribution closeness violation values for a solution $z$.

In terms of distribution closeness dominance, $x$ dominates $y$ if the following relation holds: 
\begin{equation} \label{eq16} 
\begin{split} {\left(\left(c_{s} \left(x\right)<0\right)\wedge \left(c_{s} \left(y\right)<0\right)\wedge \left(c_{s} \left(x\right)>c_{s} \left(y\right)\right)\right)} \\ \vee{\left(\left(c_{s} \left(x\right)=0\right)\wedge \left(c_{s} \left(y\right)<0\right)\right),} \end{split} 
\end{equation} 
 thus \eqref{eq16} states that $x$ dominates $y$ if both $x$ and $y$ are unfeasible, and $x$ is closer to feasibility than $y$, or $x$ is feasible and $y$ is unfeasible.

By similar assumptions, $y$ dominates $x$ if 
\begin{equation} \label{eq17} 
\begin{split} {\left(\left(c_{s} \left(x\right)<0\right)\wedge \left(c_{s} \left(y\right)<0\right)\wedge \left(c_{s} \left(x\right)<c_{s} \left(y\right)\right)\right)} \\ \vee{\left(\left(c_{s} \left(x\right)<0\right)\wedge \left(c_{s} \left(y\right)=0\right)\right).} \end{split} 
\end{equation} 
  
\subsubsection{Constraint Dominance}
 The second modification of the Pareto dominance is by the sum of constraint violation values, 
\begin{equation} \label{eq18} 
d_{x,y} \left(g\right)=g_{s} \left(x\right)-g_{s} \left(y\right), 
\end{equation} 
 where $g_{s} \left(\cdot \right)$ is the sum of all constraint violation values, evaluated for a given solution $z$ as 
\begin{equation} \label{eq19} 
g_{s} \left(z\right)=\sum _{i=1}^{N_{g} } v_{i}^{g} , 
\end{equation} 
 where $v_{i}^{g} $ is an $i$-th constraint violation value, $N_{g} $ is the number of constraint violation values for a solution $z$.

Similar to \eqref{eq16} and \eqref{eq17}, in terms of constraint dominance, $x$ dominates $y$ if the following relation holds: 
\begin{equation} \label{eq20} 
\begin{split} {\left(\left(g_{s} \left(x\right)<0\right)\wedge \left(g_{s} \left(y\right)<0\right)\wedge \left(g_{s} \left(x\right)>g_{s} \left(y\right)\right)\right)} \\ \vee{\left(\left(g_{s} \left(x\right)=0\right)\wedge \left(g_{s} \left(y\right)<0\right)\right),} \end{split} 
\end{equation} 
 thus \eqref{eq16} states that $x$ dominates $y$ if both $x$ and $y$ are unfeasible, and $x$ is closer to feasibility than $y$, or $x$ is feasible and $y$ is unfeasible.

By similar assumptions, $y$ dominates $x$ with respect to $g_{s} \left(\cdot \right)$ if 
\begin{equation} \label{eq21} 
\begin{split} {\left(\left(g_{s} \left(x\right)<0\right)\wedge \left(g_{s} \left(y\right)<0\right)\wedge \left(g_{s} \left(x\right)<g_{s} \left(y\right)\right)\right)} \\ \vee{\left(\left(g_{s} \left(x\right)<0\right)\wedge \left(g_{s} \left(y\right)=0\right)\right).} \end{split} 
\end{equation} 
 
\subsubsection{Objective Dominance}
Let $x$ and $y$ refer to two solutions, then, by theory, the $d_{x,y} \left(f\right)$ objective dominance function is defined as 
\begin{equation} \label{eq22} 
d_{x,y} \left(f\right)=\prod _{i=1,f_{{\rm 1}} \left(x\right)\ne f_{{\rm 1}} \left(y\right)}^{N_{obj} } \frac{\left|f_{i} \left(x\right)-f_{i} \left(y\right)\right|}{R_{i} } , 
\end{equation} 
 where $N_{obj} $ is the number of objectives (in our setting $N_{obj} =5$), $R_{i} $ is the range of objective $i$, while $x$ dominates $y$ if $f_{i} \left(x\right)\le f_{i} \left(y\right)$ for $\forall _{i} =1,\ldots ,N_{obj} $, and for at least one $i$ the relation $f_{i} \left(x\right)<f_{i} \left(y\right)$ holds. 
 
\subsection{Objective Function Maximization}
\label{A1}
The quantum circuit $QG$ executes operations in the ${\rm {\mathcal{H}}}$ Hilbert space. The dimension of the ${\rm {\mathcal{H}}}$ space is 
\begin{equation} \label{eq64} 
{\rm dim}\left({\rm {\mathcal{H}}}\right)=d^{n} , 
\end{equation} 
 where $d$ is the dimension of the quantum system ($d=2$ for a qubit system), while $n$ is the number of quantum states.

Using the formalism of \cite{ref16, ref17, ref18}, let assume that the computational problem fed into the quantum circuit $QG$ is specified by $n$ bits and $m$ constraints. Then, the objective function is defined as 
\begin{equation} \label{eq65} 
C\left(z\right)=\sum _{\alpha =1}^{m} C_{\alpha } \left(z\right), 
\end{equation} 
 where 
\begin{equation} \label{eq66} 
z=z_{{\rm 1}} \ldots z_{n}  
\end{equation} 
 is an $n$-length bitstring, and $C_{\alpha } \left(z\right)=1$ if $z$ satisfies constraint $\alpha $, and $C_{\alpha } \left(z\right)=0$ otherwise \cite{ref16, ref17, ref18}.

Assuming a Hilbert space of $n$ qubits, ${\rm dim}\left({\rm {\mathcal{H}}}\right)={\rm 2}^{n} $, using the computational basis vectors $\left| z\right\rangle  $, operator $C\left(z\right)$ in \eqref{eq65} is a diagonal operator in the computational basis \cite{ref16, ref17, ref18}. Then, at a particular angle $\gamma $, $\gamma \in \left[0, \pi \right]$, unitary $U\left(C,\gamma \right)$ is evaluated as 
\begin{equation} \label{eq67} 
U\left(C,\gamma \right)=e^{-i\gamma C} =\prod _{\alpha =1}^{m} e^{-i\gamma C_{\alpha } } , 
\end{equation} 
 such that all terms in the product are diagonal in the computational basis.

Then, for the $\mu $ dependent product of commuting operators, $\mu \in \left[0,\pi \right]$ \cite{ref16, ref17, ref18}, a unitary $U\left(B,\mu \right)$ is defined as 
\begin{equation} \label{eq68} 
U\left(B,\mu \right)=e^{-i\mu B} =\prod _{j=1}^{n} e^{-i\mu \sigma _{x}^{j} } , 
\end{equation} 
 where $B=\sum _{i} X_{i} $, $X_{i} =\sigma _{x}^{i} $, $\sigma _{x} $ is the Pauli $X$-operator, while $\mu $ is a control parameter \cite{ref16, ref17, ref18, ref19}, $\mu \in \left[0,\pi \right]$ . For a qubit setting, the $\left| s\right\rangle  $ initial state of the quantum computer is the uniform superposition over computational basis states, 
\begin{equation} \label{eq69} 
\left| s\right\rangle  =\frac{{\rm 1}}{\sqrt{{\rm 2}^{n} } } \sum _{z} \left| z\right\rangle  . 
\end{equation} 
 Let assume that the $G_{QG}^{k,r} $ multilayer structure of the $QG$ quantum circuit contains $n$ quantum ports of several quantum gates, and edge set 
\begin{equation} \label{eq70} 
{\rm {\mathcal{S}}}_{E} =\left\{\left\langle jk\right\rangle \right\} 
\end{equation} 
 of size $m$. Then, the aim of the optimization is to indentify a string $z$ \eqref{eq66} that the maximizes the objective function 
\begin{equation} \label{eq71} 
C=\sum _{\left\langle jk\right\rangle } C_{\left\langle jk\right\rangle } , 
\end{equation} 
 where 
\begin{equation} \label{eq72} 
C_{\left\langle jk\right\rangle } =\frac{{\rm 1}}{{\rm 2}} \left({\rm 1}-z_{i} z_{j} \right), 
\end{equation} 
 where $z_{i} =\pm 1$.

In $G_{QG}^{k,r} $ different unitary operations can be defined for the single quantum ports (qubits) and the connected quantum ports, as follows.

Let $U_{q_{s} } \left(\mu _{j} \right)$ be a unitary operator on a $q_{s} $ single port (qubits) in $G_{QG}^{k,r} $, be defined such that for each quantum ports a $\mu _{j} $ parameter is associated as 
\begin{equation} \label{eq73} 
U_{q_{s} } \left(\mu _{j} \right)=e^{-i\mu _{j} X_{j} } . 
\end{equation} 
 For the collection 
\begin{equation} \label{eq74} 
\vec{\mu }=\left(\mu _{{\rm 1}} ,\ldots ,\mu _{n} \right), 
\end{equation} 
 the resulting unitary is 
\begin{equation} \label{eq75} 
U_{q_{s} } \left(\vec{\mu }\right)=\prod _{j} U_{q_{s} } \left(\mu _{j} \right). 
\end{equation}

The unitary $U_{q_{s} } \left(\vec{\mu }\right)$ is therefore represents the applications of the unitary operations at once in the quantum ports of the $QG$ quantum circuit.

Then, let unitary $U_{q_{jk} } \left(\gamma _{jk} \right)$ be defined for connected quantum ports $q_{jk} $ in $G_{QG}^{k,r} $, as 
\begin{equation} \label{eq76} 
U_{q_{jk} } \left(\gamma _{jk} \right)=e^{i\gamma _{jk} Z_{j} Z_{k} } , 
\end{equation} 
 where $Z_{i} =\sigma _{z}^{i} $, where $\sigma _{z} $ is the Pauli $Z$-operator. Since the eigenvalues of $X_{i} $ and $Z_{j} Z_{k} $ are $\pm 1$, it allows us to restrict the values \cite{ref16, ref17, ref18} of parameters $\gamma $ and $\mu $ to the range of $\left[0,\pi \right]$.

Then, defining collection 
\begin{equation} \label{eq77} 
\vec{\gamma }=(\gamma _{jk}^{{\rm 1}} ,\ldots ,\gamma _{jk}^{h} ), 
\end{equation} 
 where $h$ is the number of individual $\gamma _{jk} $ parameters, the unitary $U_{q_{jk} } \left(\vec{\gamma }\right)$ is yielded as 
\begin{equation} \label{eq78} 
U_{q_{jk} } \left(\vec{\gamma }\right)=\prod _{\left\langle jk\right\rangle \in G_{QG}^{k,r} } U_{q_{jk} } \left(\gamma _{jk} \right). 
\end{equation} 
 Assuming that there exists a set ${\rm {\mathcal{S}}}_{\vec{\mu }}^{u} $ of $u$ collections of $\vec{\mu }$'s 
\begin{equation} \label{eq79} 
{\rm {\mathcal{S}}}_{\vec{\mu }}^{u} :\vec{\mu }^{\left({\rm 1}\right)} ,\ldots ,\vec{\mu }^{\left(u\right)}  
\end{equation} 
 and a set ${\rm {\mathcal{S}}}_{\vec{\gamma }}^{u} $ of $u$ collections of $\vec{\gamma }$'s, 
\begin{equation} \label{eq80} 
{\rm {\mathcal{S}}}_{\vec{\gamma }}^{u} :\vec{\gamma }^{\left({\rm 1}\right)} ,\ldots ,\vec{\gamma }^{\left(u\right)} , 
\end{equation} 
 a $\left| \phi \right\rangle  $ system state of the $QG$ quantum circuit is evaluated as 
\begin{equation} \label{eq81} 
\begin{split}
 \left| \phi  \right\rangle  &=\left| \mathcal{S}_{{\vec{\mu }}}^{u},\mathcal{S}_{{\vec{\gamma }}}^{u},C \right\rangle  \\ 
 & ={{U}_{{{q}_{s}}}}\left( {{{\vec{\mu }}}^{\left( u \right)}} \right){{U}_{{{q}_{jk}}}}\left( {{{\vec{\gamma }}}^{\left( u \right)}} \right)\ldots {{U}_{{{q}_{s}}}}\left( {{{\vec{\mu }}}^{\left( 1 \right)}} \right){{U}_{{{q}_{jk}}}}\left( {{{\vec{\gamma }}}^{\left( 1 \right)}} \right)\left| s \right\rangle ,  
\end{split}
\end{equation} 
 where $\left| s\right\rangle  $ is given in \eqref{eq69}.

The maximization of objective function \eqref{eq65} in the multilayer $G_{QG}^{k,r} $ structure is therefore analogous to the problem of finding the parameters of sets ${\rm {\mathcal{S}}}_{\vec{\mu }}^{u} $ \eqref{eq79} and ${\rm {\mathcal{S}}}_{\vec{\gamma }}^{u} $ \eqref{eq80} in the system state $\left| \phi \right\rangle  $ \eqref{eq81} of the $QG$ quantum circuit.

\subsection{The QTAM Algorithm}
\label{sec3}
\begin{theorem} The QTAM algorithm utilizes annealing temperatures $T_{f} \left(t\right)$, $T_{g} \left(t\right)$ and $T_{c} \left(t\right)$ to evaluate the acceptance probabilities, where $T_{f} \left(t\right)$ is the annealing temperature for the objectives, $T_{g} \left(t\right)$ is the annealing temperature for the constraints and $T_{c} \left(t\right)$ is the annealing temperature for the distribution closeness.
\end{theorem}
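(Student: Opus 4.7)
The statement is essentially a structural/constructive claim about the QTAM algorithm: it asserts that three independently scheduled annealing temperatures $T_f(t)$, $T_g(t)$, $T_c(t)$ govern the acceptance test. My plan is therefore to give a constructive proof that exhibits the algorithm and verifies that the acceptance rule it uses is well-defined and does depend on all three temperatures in the intended roles. I would begin from the baseline SA acceptance probability \eqref{eq1}, noting that the single-temperature form is inadequate here because the three quantities being balanced — objective values $f$, constraint-violation sums $g_s$, and distribution-closeness sums $c_s$ (from \eqref{eq15} and \eqref{eq19}) — live on incommensurable magnitude scales. This motivates introducing a separate Boltzmann-like factor per quantity, and I would argue (following the rationale sketched in the paragraph preceding \eqref{eq3}) that the only way to keep each scale independently controllable during cooling is to attach a distinct temperature to each factor.

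Next, I would explicitly construct the acceptance probability of \eqref{eq3} as the product of three logistic-style factors
\begin{equation*}
\Pr(s_A,s_B)=\frac{1}{1+e^{\tilde d(f)T_f(t)}e^{\tilde d(g)T_g(t)}e^{\tilde d(c)T_c(t)}},
\end{equation*}
where $\tilde d(f)$, $\tilde d(g)$, $\tilde d(c)$ are the average domination values defined in \eqref{eq22}, \eqref{eq18}, and \eqref{eq14} respectively. The verification step is to check: (i) when $T_g,T_c\to 0$ and only $T_f$ remains active, the rule collapses to a single-objective SA that correctly reproduces \eqref{eq2} in the low-temperature limit; (ii) by symmetry the analogous collapse holds for the constraint channel $T_g$ and the distribution-closeness channel $T_c$; and (iii) each temperature can be assigned its own annealing schedule of the form $T(t)=T_{\max}\exp(-R(t/k))$ without affecting the other two, which is precisely what the "three annealing temperatures" claim requires.

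I would then embed this acceptance rule inside the QTAM iteration (Algorithm~1 referenced in the text): at each move from $s_A$ to a neighbor $s_B$, the algorithm first evaluates the three dominance functions using \eqref{eq14}--\eqref{eq22}, then draws a uniform random variate and compares it to $\Pr(s_A,s_B)$, and finally updates $T_f(t),T_g(t),T_c(t)$ separately according to their cooling schedules. Since this is exactly how the acceptance test is evaluated, the theorem statement follows by construction.

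The main obstacle, in my view, is not a technical calculation but making precise what "utilizes" means in a theorem statement of this kind: one has to argue that the three temperatures enter nontrivially and independently, i.e., that no reparametrization collapses two of them into one. I would address this by observing that the three domination functions $\tilde d(f),\tilde d(g),\tilde d(c)$ are defined on structurally different spaces (vector of $N_{obj}$ objective values vs.\ scalar sums of violation terms vs.\ relative-entropy based closeness from \eqref{eq13}), so their ranges cannot be made simultaneously commensurate by a single scalar, which forces three independent schedules and completes the argument.
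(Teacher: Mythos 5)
Your overall strategy — proving the statement constructively by exhibiting the acceptance rule with its three independent cooling schedules inside the algorithm's iteration — is essentially the paper's own proof, which consists of nothing more than the listing of Algorithm 1: there, the acceptance probability \eqref{eq23} (identical in form to \eqref{eq3}) is evaluated from $\tilde{d}\left(f\right)$, $\tilde{d}\left(g\right)$, $\tilde{d}\left(c\right)$ (whose averaging formulas are \eqref{eq24}--\eqref{eq26}, not \eqref{eq22}, \eqref{eq18}, \eqref{eq14}, which are the pairwise dominance functions) together with the schedules \eqref{eq27}--\eqref{eq29}. Two differences are worth recording. First, the paper's proof is organized as a case analysis over the outcomes of the constrained Pareto dominance check ${\rm {\mathcal{D}}}_{P} \left(\xi ,\nu \right)$: cases (a) and (b) use the temperature-dependent probabilities \eqref{eq23} and \eqref{eq30}, case (e) uses the probability \eqref{eq34}, which is built from $\tilde{d}\left(\min \right)$ and involves no temperature at all, and archive maintenance is delegated to Sub-procedures 1--4. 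Your single-rule treatment suffices for the theorem as stated (it only asserts that the three temperatures enter the evaluation of acceptance probabilities), but it does not reproduce this structure. Second, your verification step (i) is incorrect with respect to the paper's formulas: in \eqref{eq3} and \eqref{eq23} the temperatures \emph{multiply} the dominance averages in the exponent (unlike \eqref{eq1}, where $T$ divides), so as $T_{g} \left(t\right),T_{c} \left(t\right)\to 0$ those factors indeed tend to $1$, but then letting $T_{f} \left(t\right)\to 0$ drives the acceptance probability to ${\rm 1}/{\rm 2}$ rather than to the deterministic rule \eqref{eq2}; under this (nonstandard) parametrization the deterministic limit is the \emph{high}-temperature one, so the claimed collapse would fail as written. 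The slip is not load-bearing, since the construction itself — the three factors with independently scheduled temperatures embedded in the accept/reject step — already establishes everything the theorem asserts, but that verification step should be corrected or dropped.
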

\begin{proof}
The detailed description of the QTAM procedure is given in Algorithm 1. 

\setcounter{algocf}{0}
\begin{algo}
  \DontPrintSemicolon
\caption{\textit{Quantum Triple Annealing Minimization (QTAM)}}
\textbf{Step 1}. Define an archive ${\rm {\mathcal{A}}}$ with random solutions, and select a $\xi $ random solution from ${\rm {\mathcal{A}}}$.

\textbf{Step 2}. Define $\nu $ as $\nu =\Xi \left(\xi \right)$, where $\Xi \left(\cdot \right)$ is a moving operator. Determine the dominance relation between $\xi $ and $\nu $ via ${\rm {\mathcal{D}}}_{P} \left(\xi ,\nu \right)$, where function ${\rm {\mathcal{D}}}_{P} \left(\cdot \right)$ is the constrained Pareto dominance checking function.

\textbf{Step 3}. Evaluate acceptance probabilities based on ${\rm {\mathcal{D}}}_{P} \left(\xi ,\nu \right)$.

\textbf{(a)}: If ${\rm {\mathcal{D}}}_{P} \left(\xi ,\nu \right)=\nu \angle \xi $ ($\xi $ dominates $\nu $, where $\angle $ is the Pareto dominance operator), then $\xi =\nu $, with probability 
\begin{equation} \label{eq23} 
{\Pr }\left(\left. \xi =\nu \right|\nu \angle \xi \right)=\frac{{\rm 1}}{{\rm 1}+e^{\tilde{d}\left(f\right)T_{f} \left(t\right)} e^{\tilde{d}\left(g\right)T_{g} \left(t\right)} e^{\tilde{d}\left(c\right)T_{c} \left(t\right)} } , 
\end{equation} 
 where $\tilde{d}\left(f\right)$, is the average objective dominance, evaluated as 
\begin{equation} \label{eq24} 
\tilde{d}\left(f\right)=\frac{\left(\sum _{i=1}^{k} d_{i,\nu } \left(f\right)\right)+d_{\xi ,\nu } \left(f\right)}{k+{\rm 1}} , 
\end{equation} 
 where $d_{x,y} \left(f\right)$ is the objective dominance function as given in \eqref{eq22}, while $\tilde{d}\left(g\right)$ average constraint dominance as 
\begin{equation} \label{eq25} 
\tilde{d}\left(g\right)=\frac{-\left(\sum _{i=1}^{k} d_{\nu ,i} \left(g\right)\right)-d_{\nu ,\xi } \left(g\right)}{k+{\rm 1}} , 
\end{equation} 
 where $d_{x,y} \left(g\right)$ is the constraint dominance function as given in \eqref{eq18}, and $\tilde{d}\left(c\right)$ is average distribution closeness dominance as 
\begin{equation} \label{eq26} 
\tilde{d}\left(c\right)=\frac{-\left(\sum _{i=1}^{k} d_{\nu ,i} \left(c\right)\right)-d_{\nu ,\xi } \left(c\right)}{k+{\rm 1}} , 
\end{equation} 
 where $d_{x,y} \left(c\right)$ is the distribution closeness dominance function as given in \eqref{eq14}, while $T_{f} \left(t\right)$ is the annealing temperature for the objectives 
\begin{equation} \label{eq27} 
T_{f} \left(t\right)=T_{f_{\max } } e^{-R\left(\frac{t}{k} \right)} , 
\end{equation} 
 where $R$ is the temperature decreasing rate, $T_{f_{\max } } $ is a maximum (initial) value for annealing the objectives factor, $T_{g} \left(t\right)$ is the annealing temperature for the constraints 
\begin{equation} \label{eq28} 
T_{g} \left(t\right)=T_{g_{\max } } e^{-R\left(\frac{t}{k} \right)} , 
\end{equation} 
 where $T_{g_{\max } } $ is a maximum (initial) value for annealing the constraint factor, and $T_{c} \left(t\right)$ is the annealing temperature for the distribution closeness 
\begin{equation} \label{eq29} 
T_{c} \left(t\right)=T_{c_{\max } } e^{-R\left(\frac{t}{k} \right)} , 
\end{equation} 
 where $T_{c_{\max } } $ is a maximum (initial) value for annealing the distribution closeness factor, respectively.  
\end{algo}

\setcounter{algocf}{0}
\begin{algo}
  \DontPrintSemicolon
\caption{\textit{Quantum Triple Annealing Minimization (QTAM), cont.}}
\textbf{(b)}. If ${\rm {\mathcal{D}}}_{P} \left(\xi ,\nu \right)=\left(\nu \neg \angle \xi \right)\wedge \left(\xi \neg \angle \nu \right)$ ($\nu $ and $\xi $ are non-dominating to each other) such that $\nu $ is dominated by $k\ge {\rm 1}$ points in ${\rm {\mathcal{A}}}$, ${\rm {\mathcal{D}}}_{P} \left(\xi ,\nu \right)=\nu \angle \left({\rm {\mathcal{A}}}\right)_{k} $, then $\xi =\nu $, with probability 
\begin{equation} \label{eq30} 
\begin{split} {{\Pr}\left(\left. \xi =\nu \right|\left(\nu \neg \angle \xi \right)\wedge \left(\xi \neg \angle \nu \right),\nu \angle \left({\rm {\mathcal{A}}}\right)_{k} \right)} \\ {=\frac{{\rm 1}}{{\rm 1}+e^{\tilde{d}\left(f\right)_{k} T_{f} \left(t\right)} e^{\tilde{d}\left(g\right)_{k} T_{g} \left(t\right)} e^{\tilde{d}\left(c\right)_{k} T_{c} \left(t\right)} } ,} \end{split} 
\end{equation} 
 where 
\begin{equation} \label{eq31} 
\tilde{d}\left(f\right)_{k} =\tilde{d}\left(f\right)-d_{\xi ,\nu } \left(f\right), 
\end{equation} 
 where $\tilde{d}\left(f\right)$ is as in \eqref{eq24}, 
\begin{equation} \label{eq32} 
\tilde{d}\left(g\right)_{k} =\tilde{d}\left(g\right)+d_{\nu ,\xi } \left(g\right), 
\end{equation} 
 where $\tilde{d}\left(g\right)$ is as in \eqref{eq25}, while 
\begin{equation} \label{eq33} 
\tilde{d}\left(c\right)_{k} =\tilde{d}\left(c\right)+d_{\nu ,\xi } \left(c\right), 
\end{equation} 
 where $\tilde{d}\left(c\right)$ is as in \eqref{eq26}.

\textbf{(c)}: If ${\rm {\mathcal{D}}}_{P} \left(\xi ,\nu \right)=\left(\nu \neg \angle \xi \right)\wedge \left(\xi \neg \angle \nu \right)$, and ${\rm {\mathcal{D}}}_{P} \left(\nu ,{\rm {\mathcal{A}}}\right)=\left({\rm {\mathcal{A}}}\neg \angle \nu \right)$, thus $\nu $ is non-dominating with respect to ${\rm {\mathcal{A}}}$, then apply Sub-procedure 1.

\textbf{(d)}: If ${\rm {\mathcal{D}}}_{P} \left(\xi ,\nu \right)=\left(\nu \neg \angle \xi \right)\wedge \left(\xi \neg \angle \nu \right)$, and ${\rm {\mathcal{D}}}_{P} \left(\nu ,{\rm {\mathcal{A}}}\right)=\left(\left({\rm {\mathcal{A}}}\right)_{k} \angle \nu \right)$, thus $\nu $ dominates $k\ge {\rm 1}$ points in ${\rm {\mathcal{A}}}$, then apply Sub-procedure 2.

\textbf{(e)}: If ${\rm {\mathcal{D}}}_{P} \left(\xi ,\nu \right)=\xi \angle \nu $ such that ${\rm {\mathcal{D}}}_{P} \left(\nu ,{\rm {\mathcal{A}}}\right)=\left(\nu \angle \left({\rm {\mathcal{A}}}\right)_{k} \right)$, thus $\nu $ is dominated by $k\ge {\rm 1}$ points in ${\rm {\mathcal{A}}}$, then set $\xi =\nu $, with probability 
\begin{equation} \label{eq34} 
{\Pr }\left(\left. \xi =\nu \right|\xi \angle \nu ,\nu \angle \left({\rm {\mathcal{A}}}\right)_{k} \right)=\frac{{\rm 1}}{{\rm 1}+e^{-\tilde{d}\left(\min \right)} } , 
\end{equation} 
 where $\tilde{d}\left(\min \right)$ is evaluated as 
\begin{equation} \label{eq35} 
\tilde{d}\left(\min \right)=\mathop{\min }\limits_{\forall k} {\left(d_{\nu ,k} \left(f\right)-\left(d_{k,\nu } \left(g\right)+d_{k,\nu } \left(c\right)\right)\right)}. 
\end{equation} 
 Using \eqref{eq35}, apply Sub-procedure 3. To evaluate the ${\rm {\mathcal{D}}}_{P} \left(\nu ,{\rm {\mathcal{A}}}\right)$ relations between $\nu $ and the elements of ${\rm {\mathcal{A}}}$ at ${\rm {\mathcal{D}}}_{P} \left(\xi ,\nu \right)=\xi \angle \nu $, apply Sub-procedure 4.

Step 4. Apply Steps 2-3, until $i<N_{it} $, where $i$ is the actual iteration, $N_{it} $ is the total number of iterations.  
\end{algo}

The related steps are detailed in Sub-procedures 1-4. In Step 3 of Sub-procedure 1, the best ${{\mathcal{A}}_{s}}$ solutions refer to those solutions from $\mathcal{A}$ that have the largest values of the crowding distance \cite{ref27}. Particularly, in this step, the solutions are also sorted and compared by a crowded comparison operator to find the best solution. 

\setcounter{algocf}{0}
\begin{subproc}
  \DontPrintSemicolon
\caption{\textit{}} 
\textbf{Step 1}. Set $\xi =\nu $, and add $\nu $ to ${\rm {\mathcal{A}}}$.

\textbf{Step 2}. If $\left|{\rm {\mathcal{A}}}\right|>A_{s} $, where $\left|{\rm {\mathcal{A}}}\right|$ is the number of elements in ${\rm {\mathcal{A}}}$, $A_{s} $ is the maximal archive size, then assign $\Delta _{cr} \left({\rm {\mathcal{A}}}\right)$ to ${\rm {\mathcal{A}}}$, where $\Delta _{cr} \left(\cdot \right)$ is the crowding distance.

\textbf{Step 3}. Select the best $A_{s} $ elements.  
\end{subproc}

\begin{subproc}
  \DontPrintSemicolon
\caption{\textit{}} 
\textbf{Step 1}. Set $\xi =\nu $, and add $\nu $ to ${\rm {\mathcal{A}}}$.

\textbf{Step 2}. Remove all the $k$ dominated points from ${\rm {\mathcal{A}}}$.  
\end{subproc}

\begin{subproc}
  \DontPrintSemicolon
\caption{\textit{}} 
\textbf{Step 1}. Set $\xi =k_{\tilde{d}\left(\min \right)} $, where $k_{\tilde{d}\left(\min \right)} $ is a point of ${\rm {\mathcal{A}}}$ that corresponds to $\tilde{d}\left(\min \right)$ (see \eqref{eq35}) with probability ${\Pr }\left(\left. \xi =\nu \right|\xi \angle \nu ,\nu \angle \left({\rm {\mathcal{A}}}\right)_{k} \right)$ (see \eqref{eq34}).

\textbf{Step 2}. Otherwise set $\xi =\nu $.  
\end{subproc}

\begin{subproc}
  \DontPrintSemicolon
\caption{\textit{}} 
\textbf{Step 1}. If ${\rm {\mathcal{D}}}_{P} \left(\nu ,{\rm {\mathcal{A}}}\right)={\rm {\mathcal{A}}}\neg \angle \nu $, i.e., $\nu $ is non-dominating with respect to ${\rm {\mathcal{A}}}$, then set $\xi =\nu $, and add $\nu $ to ${\rm {\mathcal{A}}}$. If $\left|{\rm {\mathcal{A}}}\right|>A_{s} $, then assign $\Delta _{cr} \left({\rm {\mathcal{A}}}\right)$ to ${\rm {\mathcal{A}}}$, and select the best $A_{s} $ elements.

\textbf{Step 2}. If ${\rm {\mathcal{D}}}_{P} \left(\nu ,\left({\rm {\mathcal{A}}}\right)_{k} \right)=\left({\rm {\mathcal{A}}}\right)_{k} \angle \nu $, i.e., $\nu $ dominates $k$ points in ${\rm {\mathcal{A}}}$, then set $\xi =\nu $, and add $\nu $ to ${\rm {\mathcal{A}}}$. Remove the $k$ points from ${\rm {\mathcal{A}}}$.  
\end{subproc}

\end{proof} 

\subsubsection{Computational Complexity of QTAM}
Following the complexity analysis of \cite{ref24, ref25, ref26, ref27}, the computational complexity of QTAM is evaluated as 
\begin{equation} \label{eq36} 
{\rm \mathcal{O}}\left(N_{d} N_{it} \left|{\rm {\mathcal{P}}}\right|\left(N_{obj} +\log _{2} \left(\left|{\rm {\mathcal{P}}}\right|\right)\right)\right), 
\end{equation} 
 where $N_{d} $ is the number of dominance measures, $N_{it} $ is the number of total iterations, $\left|{\rm {\mathcal{P}}}\right|$ is the population size, while $N_{obj} $ is the number of objectives.

\section{Wiring Optimization and Objective Function Maximization}
\label{sec4}
\subsection{Multilayer Quantum Circuit Grid}
 An $i$-th quantum gate of $QG$ is denoted by $g_{i} $, a $k$-th port of the quantum gate $g_{i} $ is referred to as $g_{i,k} $. Due to the hardware restrictions of gate-model quantum computer implementations \cite{ref16, ref17, ref18, ref19}, the quantum gates are applied in several rounds. Thus, a multilayer, $k$-dimensional (for simplicity we assume $k=2$), $n$-sized finite square-lattice grid $G_{QG}^{k,r} $ can be constructed for $QG$, where $r$ is the number of layers, $l_{z} $, $z=1,\ldots ,r$ . A quantum gate $g_{i} $ in the $z$-th layer $l_{z} $ is referred to as $g_{i}^{l_{z} } $, while a $k$-th port of $g_{i}^{l_{z} } $ is referred to as $g_{i,k}^{l_{z} } $.

\subsection{Method}

\begin{theorem}
There exists a method for the parallel optimization of quantum wiring in physical-layout of the quantum circuit and for the maximization of an objective function $C_{\alpha } \left(z\right)$.
\end{theorem}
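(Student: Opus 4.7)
The plan is to exhibit such a method by composing the wiring objective $F_2$ of \eref{eq7} with the objective-function maximization task of \sref{A1} on the same multilayer grid $G_{QG}^{k,r}$, and to drive the joint search by the QTAM algorithm of Theorem~1. Since QTAM already handles multiple objectives via triple annealing with temperatures $T_f, T_g, T_c$ and the corresponding dominance functions, the existence argument reduces to exhibiting (i) a single design vector $\mathrm{x}$ that jointly encodes the physical layout and the quantum parameters, (ii) a neighbor-generating move operator $\Xi$ acting on $\mathrm{x}$, and (iii) a verification that the cost of \eref{eq4} with $N_{obj}=2$ is evaluable and admits a Pareto archive on that encoding.

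For (i)-(ii) I would split $\mathrm{x} = (\mathrm{x}_{\mathrm{lay}}, \mathrm{x}_{\mathrm{par}})$, where $\mathrm{x}_{\mathrm{lay}}$ encodes the positions of the gates $g_i^{l_z}$ and the routing of the $h$ nets through the ports $g_{i,k}^{l_z}$, and $\mathrm{x}_{\mathrm{par}} = (\mathcal{S}_{\vec{\mu}}^u, \mathcal{S}_{\vec{\gamma}}^u)$ collects the unitary parameters entering \eref{eq79}-\eref{eq81}. The wiring cost $F_2(\mathrm{x}_{\mathrm{lay}}) = \sum_{k=1}^{h} \sum_{i,j} \ell_{ij}\,\delta_{ij}(\psi_{ij})$ is scalar and evaluable directly from the lattice geometry, while $F_3(\mathrm{x}_{\mathrm{par}}) = \langle \phi | C | \phi \rangle$ of \eref{eq8}, with $|\phi\rangle$ as in \eref{eq81} and $C = \sum_{\langle jk\rangle} C_{\langle jk \rangle}$, is estimated from the $M$ measurement block. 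The combined move operator is $\Xi = \Xi_{\mathrm{lay}} \cup \Xi_{\mathrm{par}}$, where $\Xi_{\mathrm{lay}}$ performs a swap or shift of a gate on the lattice (keeping the layout sub-problem finite) and $\Xi_{\mathrm{par}}$ perturbs a single entry of some $\vec{\mu}^{(i)}$ or $\vec{\gamma}^{(i)}$ within $[0,\pi]$.

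With these ingredients in place, each QTAM iteration generates $\nu = \Xi(\xi)$, computes the dominance relation $\mathcal{D}_P(\xi,\nu)$ with respect to the pair $(F_2, F_3)$ together with the distribution-closeness constraint of \eref{eq13}, and accepts $\nu$ via Steps 3(a)-(e) of Algorithm~1; invoking Theorem~1 then yields a Pareto archive $\mathcal{A}$ which is the output of the method. The main obstacle I expect is the coupling between the two objectives: a layout move relabels the edge set $\mathcal{S}_E$ that defines $C$ through \eref{eq71}, so $F_3$ can drop sharply even when $F_2$ improves. The closeness dominance $\tilde{d}(c)$ of \eref{eq14}, driven by $T_c(t) = T_{c_{\max}} e^{-R(t/k)}$ of \eref{eq29}, is exactly the mechanism provided to veto such moves, so the proof reduces to verifying that a sufficiently slow decay of $T_c$ relative to $T_f$ asymptotically rejects layout perturbations that destroy the approximation of the reference output distribution $P_{QG_R}$.
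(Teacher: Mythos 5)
Your route differs from the paper's: the paper does not prove Theorem 2 by re-running QTAM on a joint layout-plus-parameter encoding. Instead it exhibits an explicit routing procedure (Method 1): construct $G_{QG}^{k,r}$, attach $C_{\left\langle i,j\right\rangle }\left(z\right)=\frac{{\rm 1}}{{\rm 2}}\left({\rm 1}-z_{i}z_{j}\right)$ to the edges as in \eref{eq38}, find candidate assignments of a separation point $\Delta $ around a blockage $\beta $ via minimum-cost trees (minimal ${\rm L1}$ path cost with maximal path overlap), evaluate the induced objective $C_{\alpha _{k} }\left(z\right)$ for \emph{each} candidate wiring $k$, select $C_{\alpha }^{*}\left(z\right)=\max _{k}C_{\alpha _{k} }\left(z\right)$, and repeat over all paths. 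The wiring minimization and the objective maximization are thus coupled by construction: the coupling you flag as the "main obstacle" is exactly the selection criterion in the paper's Step 4. Since the theorem only asserts existence of a method, your alternative packaging could in principle suffice at the paper's level of rigor --- but your argument is left incomplete at precisely the point where the paper's construction does the work, and the mechanism you invoke to close it does not do what you claim.

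Concretely, two things fail in your final step. First, the distribution-closeness dominance is not a guard for $F_{3}$: by \eref{eq13}--\eref{eq15} it penalizes divergence of $Q_{QG}$ from the reference distribution $P_{QG_{R} }$ (a feasibility-violation bookkeeping), whereas when a layout move relabels the edge set ${\rm {\mathcal{S}}}_{E}$ the operator $C$ of \eref{eq71} is itself redefined; closeness of the output distribution to $P_{QG_{R} }$ gives no control whatsoever over the value of the new $\langle \vec{\Phi }|C|\vec{\Phi }\rangle $. Second, the asymptotic-veto claim contradicts the algebra of the acceptance rule: in \eref{eq23} the temperature \emph{multiplies} the dominance term in the exponent, so as $T_{c}\left(t\right)=T_{c_{\max } }e^{-R\left(t/k\right)}\to 0$ the factor $e^{\tilde{d}\left(c\right)T_{c}\left(t\right)}\to 1$ and the closeness term's influence \emph{vanishes} with time rather than grows; moreover $T_{c}$ and $T_{f}$ share the same rate $R$ and scaling $k$ in \eref{eq27} and \eref{eq29}, so "a sufficiently slow decay of $T_{c}$ relative to $T_{f}$" is not a knob the algorithm as defined provides. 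To repair the proof, replace the schedule-tuning argument with what Method 1 does: make the evaluation of the objective function part of the wiring selection itself, i.e., restrict to cost-minimal candidate wirings and pick the one maximizing the induced $C_{\alpha _{k} }\left(z\right)$, so that $F_{{\rm 2}}\wedge F_{{\rm 3}}$ holds deterministically for every processed path rather than asymptotically in expectation.
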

\begin{proof}
The aim of this procedure (Method 1) is to provide a simultaneous physical-layer optimization and Hamiltonian minimization via the minimization of the wiring lengths in the multilayer structure of $QG$ and the maximization of the objective function (see also \sref{A1}). Formally, the aim of Method 1 is the $F_{{\rm 2}} \wedge F_{{\rm 3}} $ simultaneous realization of the objective functions $F_{{\rm 2}} $ and $F_{{\rm 3}} $.

Using the $G_{QG}^{k,r} $ multilayer grid of the $QG$ quantum circuit determined via $F_{{\rm 1}} $ and $F_{{\rm 2}} $, the aim of $F_{{\rm 3}} $ maximization of the objective function $C\left(z\right)$, where $z=z_{{\rm 1}} \ldots z_{n} $ in an $n$-length input string, where each $z_{i} $ is associated to an edge of $G_{QG}^{k,r} $ connecting two quantum ports. The objective function $C\left(z\right)$ associated to an arbitrary computational problem is defined as 
\begin{equation} \label{eq38} 
C\left(z\right)=\sum _{\left\langle i,j\right\rangle \in G_{QG}^{k,r} } C_{\left\langle i,j\right\rangle } \left(z\right), 
\end{equation} 
 where $C_{\left\langle i,j\right\rangle } $ is the objective function for an edge of $G_{QG}^{k,r} $ that connects quantum ports $i$ and $j$.

The $C^{{\rm *}} \left(z\right)$ maximization of objective function \eqref{eq38} yields a system state $\Psi $ for the quantum computer \cite{ref16, ref17, ref18, ref19} as 
\begin{equation} \label{eq39} 
\Psi =\left\langle \left. \gamma ,\mu ,C^{{\rm *}} \left(z\right)\right|\right. C^{{\rm *}} \left(z\right)\left| \gamma ,\mu ,C^{{\rm *}} \left(z\right)\right\rangle  , 
\end{equation} 
 where 
\begin{equation} \label{eq40} 
{\left| \gamma ,\mu ,C^{*} \left(z\right) \right\rangle} =U\left(B,\mu \right)U\left(C^{*} \left(z\right),\gamma \right){\left| s \right\rangle} , 
\end{equation} 
while 
\begin{equation} \label{eq43} 
U\left(C^{{\rm *}} \left(z\right),\gamma \right)\left| z\right\rangle  =e^{-i\gamma C^{{\rm *}} \left(z\right)} \left| z\right\rangle  , 
\end{equation} 
 where $\gamma $ is a single parameter \cite{ref16, ref17, ref18, ref19}.

The objective function \eqref{eq38} without loss of generality can be rewritten as 
\begin{equation} \label{eq44} 
C\left(z\right)=\sum _{\alpha } C_{\alpha } \left(z\right), 
\end{equation} 
 where $C_{\alpha } $ each act on a subset of bits, such that $C_{\alpha } \in \left\{{\rm 0,1}\right\}$. Therefore, there exists a selection of parameters of $\vec{\Phi }$ in \eqref{eq9} such that \eqref{eq44} picks up a maximized value $C^{{\rm *}} \left(z\right)$, which yields system state $\Upsilon $ as 
\begin{equation} \label{eq45} 
\Upsilon =\langle \vec{\Phi }|C^{{\rm *}} (z)|\vec{\Phi }\rangle . 
\end{equation} 
 Therefore, the resulting Hamiltonian $H$ associated to the system state \eqref{eq45} is minimized via $F_{{\rm 2}} $ (see \eqref{eq57}) as 
\begin{equation} \label{eq46} 
E_{L} (\vec{\Phi })=\min {\langle \vec{\Phi }|H|\vec{\Phi }\rangle }, 
\end{equation} 
 since the physical-layer optimization minimizes the $\ell _{ij} $ physical distance between the quantum ports, therefore the energy $E_{L} (\vec{\Phi })$ of the Hamiltonian associated to $\vec{\Phi }$ is reduced to a minima. 

The steps of the method $F_{{\rm 2}} \wedge F_{{\rm 3}} $ are given in Method 1. The method minimizes the number of quantum wires in the physical-layout of $QG$, and also achieves the desired system state $\Psi $ of \eqref{eq39}. 

\setcounter{algocf}{0}
\begin{proced}
  \DontPrintSemicolon
\caption{\textit{Quantum Wiring Optimization and Objective Function Maximization}}
\textbf{Step 1}. Construct the $G_{QG}^{k,r} $ multilayer grid of the $QG$ quantum circuit, with $r$ layers $l_{{\rm 1}} ,\ldots ,l_{r} $. Determine the 
\[C\left(z\right)=\sum _{\left\langle i,j\right\rangle \in G_{QG}^{k,r} } C_{\left\langle i,j\right\rangle } \left(z\right)\] 
objective function, where each $C_{\left\langle i,j\right\rangle } $ refers to the objective function for an edge in $G_{QG}^{k,r} $ connecting quantum ports $i$ and $j$, defined as 
\[C_{\left\langle i,j\right\rangle } \left(z\right)=\frac{{\rm 1}}{{\rm 2}} \left({\rm 1}-z_{i} z_{j} \right),\] 
where $z_{i} =\pm 1$.

\textbf{Step 2}. Find the optimal assignment of separation point $\Delta $ in $G_{QG}^{k,r} =\left(V,E,f\right)$ at a physical-layer blockage $\beta $ via a minimum-cost tree in $G_{QG}^{k,r} $ containing at least one port from each quantum gate $g_{i} $, $i=1,\ldots ,\left|V\right|$. For all pairs of quantum gates $g_{i} $, $g_{j} $, minimize the $f_{p,c} $ path cost (${\rm L1}$ distance) between a source quantum gate $g_{i} $ and destination quantum gate $g_{j} $ and then maximize the overlapped ${\rm L1}$ distance between $g_{i} $ and $\Delta $.

\textbf{Step 3}. For the $s$ found assignments of $\Delta $ in Step 2, evaluate the objective functions $C_{\alpha _{i} } $, $k=1,\ldots ,s$, where $C_{\alpha _{0} } $ is the initial value. Let the two paths ${\rm {\mathcal{P}}}_{{\rm 1}} $ and ${\rm {\mathcal{P}}}_{{\rm 2}} $ between quantum ports $g_{i{\rm ,1}} $, $g_{j{\rm ,1}} $, $g_{j{\rm ,2}} $ be given as ${\rm {\mathcal{P}}}_{{\rm 1}} :g_{i{\rm ,1}} \to \Delta \to g_{j{\rm ,1}} $, and ${\rm {\mathcal{P}}}_{{\rm 2}} :g_{i{\rm ,1}} \to \Delta \to g_{j{\rm ,2}} $. Evaluate objective functions $C_{\left\langle g_{i{\rm ,1}} ,\Delta \right\rangle } \left(z\right)$, $C_{\left\langle \Delta ,g_{j{\rm ,1}} \right\rangle } \left(z\right)$ and $C_{\left\langle \Delta ,g_{j{\rm ,2}} \right\rangle } \left(z\right)$.

\textbf{Step 4}. Select that $k$-th solution, for which 
\[{{C}_{{{\alpha }_{k}}}}\left( z \right)=C_{\left\langle {{g}_{i,1}},\Delta  \right\rangle }^{\left( k \right)}\left( z \right)+C_{\left\langle \Delta ,{{g}_{j,1}} \right\rangle }^{\left( k \right)}\left( z \right)+C_{\left\langle \Delta ,{{g}_{j,2}} \right\rangle }^{\left( k \right)}\left( z \right)\]
is maximal, where $C_{\left\langle i,j\right\rangle }^{\left(k\right)} $ is the objective function associated to a $k$-th solution between quantum ports $g_{i{\rm ,1}} $, $g_{j{\rm ,1}} $, and $g_{i{\rm ,1}} $, $g_{j{\rm ,2}} $ in $G_{QG}^{k,r} $. The resulting $C_{\alpha }^{{\rm *}} \left(z\right)$ for ${\rm {\mathcal{P}}}_{{\rm 1}} $ and ${\rm {\mathcal{P}}}_{{\rm 2}} $ is as 
\[C_{\alpha }^{*} \left(z\right)=\mathop{\mathop{\max }}\limits_{k} {\kern 1pt} \left(C_{\alpha _{k} } \left(z\right)\right).\] 

\textbf{Step 5}. Repeat steps 2-4 for all paths of $G_{QG}^{k,r} $.
\end{proced}

\end{proof}

The steps of Method 1 are illustrated in \fref{fig2}, using the $G_{QG}^{k,r} $ multilayer topology of the $QG$ quantum gate structure, $l_{i} $ refers to the $i$-th layer of $G_{QG}^{k,r} $.

 \begin{center}
\begin{figure}[h!]
%\vspace{-0.4cm}
\begin{center}
\includegraphics[angle = 0,width=0.8\linewidth]{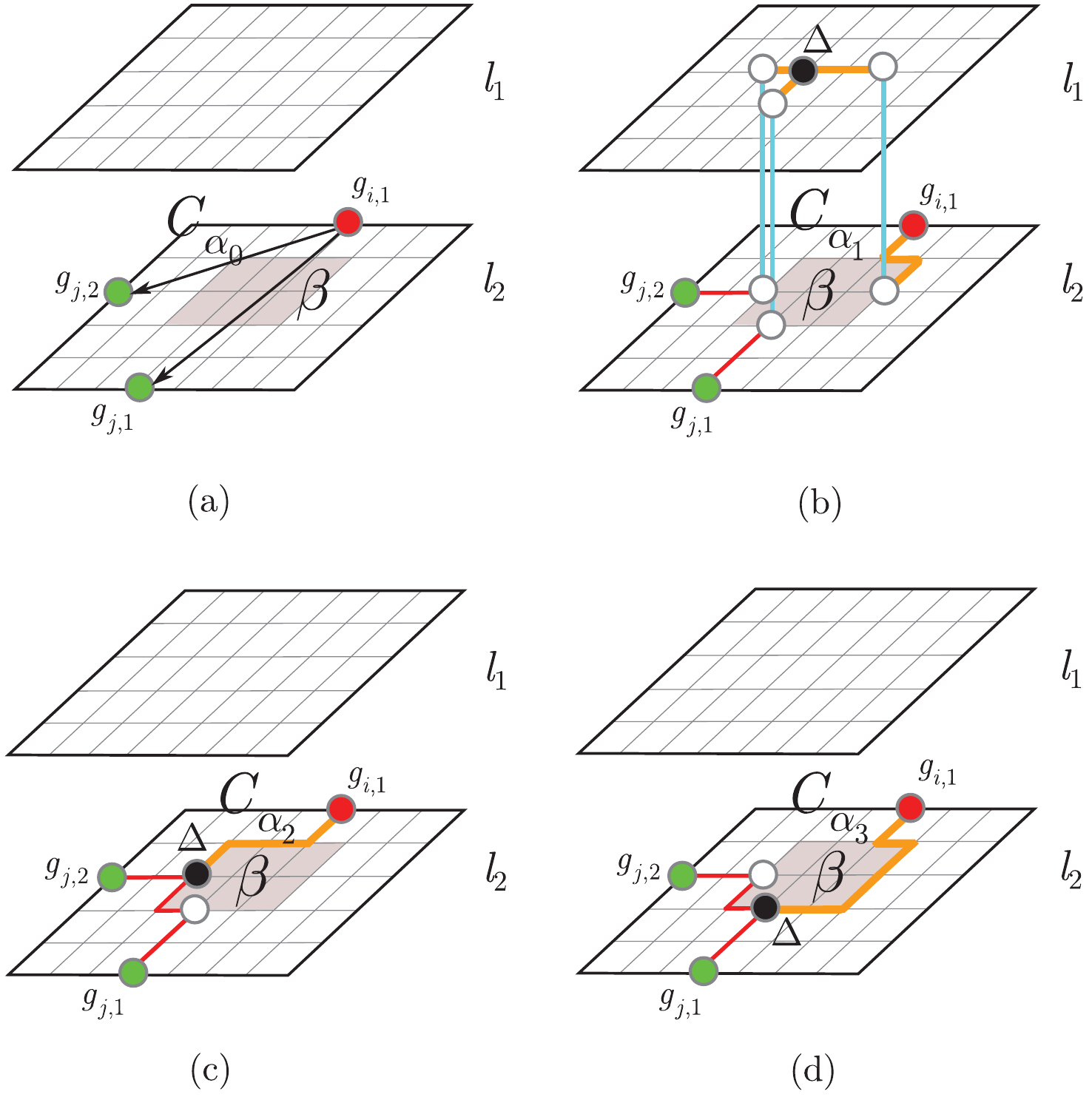}
\caption{The aim is to find the optimal wiring in $G_{QG}^{k,r} $ for the $QG$ quantum circuit (minimal path length with maximal overlapped path between $g_{i{\rm ,1}} $ and $g_{j{\rm ,1}} $,$g_{j{\rm ,2}} $) such that the $C_{\alpha } $ objective function associated to the paths ${\rm {\mathcal{P}}}_{{\rm 1}} :g_{i{\rm ,1}} \to g_{j{\rm ,1}} $, and ${\rm {\mathcal{P}}}_{{\rm 2}} :g_{i{\rm ,1}} \to g_{j{\rm ,2}} $ is maximal. (a): The initial objective function value is $C_{\alpha _{0} } $. A physical-layer blockage $\beta $ in the quantum circuit allows no to use paths ${\rm {\mathcal{P}}}_{{\rm 1}} $ and ${\rm {\mathcal{P}}}_{{\rm 2}} $. (b): The wire length is optimized via the selection point $\Delta $. The path cost is $f_{p,c} =11+3f_{l} $, where $f_{l} $ is the cost function of the path between the layers $l_{{\rm 1}} $ and $l_{{\rm 2}} $ (depicted by the blue vertical line), the path overlap from $g_{i{\rm ,1}} $ to $\Delta $ is $\tau _{o} =5+f_{l} $. The objective function value is $C_{\alpha _{{\rm 1}} } $. (c): The path cost is $f_{p,c} =10$, the path overlap from $g_{i{\rm ,1}} $ to $\Delta $ is $\tau _{o} =4$. The objective function value is $C_{\alpha _{{\rm 2}} } $. (d): The path cost is $f_{p,c} =12$, the path overlap from $g_{i{\rm ,1}} $ to $\Delta $ is $\tau _{o} =6$. The objective function value is $C_{\alpha _{{\rm 3}} } $. The selected connection topology from (b), (c), and (d) is that which yields the maximized objective function $C_{\alpha }^{{\rm *}} $.} 
 \label{fig2}
 \end{center}
\end{figure}
\end{center}

\subsection{Quantum Circuit Minimization}
 For objective function $F_{{\rm 1}} $, the area minimization of the $QG$ quantum circuit requires the following constraints. Let $S_{v} \left(P_{i} \right)$ be the vertical symmetry axis of a proximity group $P_{i} $ \cite{ref24, ref25, ref26} on $QG$, and let $x_{S_{v} \left(P_{i} \right)} $ refer to the $x$-coordinate of $S_{v} \left(P_{i} \right)$. Then, by some symmetry considerations for $x_{S_{v} \left(P_{i} \right)} $, 
\begin{equation} \label{eq47} 
x_{S_{v} \left(P_{i} \right)} =\frac{{\rm 1}}{{\rm 2}} \left(x_{i}^{{\rm 1}} +x_{i}^{{\rm 2}} +\kappa _{i} \right), 
\end{equation} 
 where $x_{i} $ is the bottom-left $x$ coordinate of a cell $\sigma _{i} $, $\kappa _{i} $ is the width of $\sigma _{i} $, and 
\begin{equation} \label{eq48} 
y_{i}^{{\rm 1}} +\frac{h_{i} }{{\rm 2}} =y_{i}^{{\rm 2}} +\frac{h_{i} }{{\rm 2}} , 
\end{equation} 
 where $y_{i} $ is the bottom-left $y$ coordinate of a cell $\sigma _{i} $, $h_{i} $ is the height of $\sigma _{i} $.

Let $\left(\sigma ^{{\rm 1}} ,\sigma ^{{\rm 2}} \right)$ be a symmetry pair \cite{ref24, ref25, ref26} that refers to two matched cells placed symmetrically in relation to $S_{v} \left(P_{i} \right)$, with bottom-left coordinates $\left(\sigma ^{{\rm 1}} ,\sigma ^{{\rm 2}} \right)=\left(\left(x_{i}^{{\rm 1}} ,y_{i}^{{\rm 1}} \right),\left(x_{i}^{{\rm 2}} ,y_{i}^{{\rm 2}} \right)\right)$. Then, $x_{S_{v} \left(P_{i} \right)} $ can be rewritten as 
\begin{equation} \label{eq49} 
x_{S_{v} \left(P_{i} \right)} =x_{i}^{{\rm 1}} -x_{i} =x_{i}^{{\rm 2}} +x_{i} +\kappa _{i} , 
\end{equation} 
 with the relation $y_{i}^{{\rm 1}} =y_{i}^{{\rm 2}} =y_{i} $.

Let $\sigma ^{S} =\left(x_{i}^{S} ,y_{i}^{S} \right)$ be a cell which is placed centered \cite{ref24, ref25, ref26} with respect to $S_{v} \left(P_{i} \right)$. Then, $x_{S_{v} \left(P_{i} \right)} $ can be evaluated as 
\begin{equation} \label{eq50} 
x_{S_{v} \left(P_{i} \right)} =x_{i}^{S} +\frac{\kappa _{i} }{{\rm 2}} , 
\end{equation} 
 along with $y_{i}^{S} =y_{i} $. Note that it is also possible that for some cells in $QG$ there is no symmetry requirements, these cells are denoted by $\sigma ^{0} $.

As can be concluded, using objective function $F_{{\rm 1}} $ for the physical-layer minimization of $QG$, a $d$-dimensional constraint vector ${\rm \mathbf{x}}_{F_{{\rm 1}} }^{d} $ can be formulated with the symmetry considerations as follows: 
\begin{equation} \label{eq51} 
{\rm \mathbf{x}}_{F_{{\rm 1}} }^{d} =\sum _{N_{\left(\sigma ^{{\rm 1}} ,\sigma ^{{\rm 2}} \right)} } \left(x_{i} ,y_{i} ,r_{i} \right)+\sum _{N_{\sigma ^{S} } } \left(y_{i} ,r_{i} \right)+\sum _{N_{\sigma ^{0} } } \left(x_{i} ,y_{i} ,r_{i} \right), 
\end{equation} 
 where $N_{\left(\sigma ^{{\rm 1}} ,\sigma ^{{\rm 2}} \right)} $ is the number of $\left(\sigma ^{{\rm 1}} ,\sigma ^{{\rm 2}} \right)$ symmetry pairs, $N_{\sigma ^{S} } $ is the number of $\sigma ^{S} $-type cells, while $N_{\sigma ^{0} } $ is the number of $\sigma ^{0} $-type cells, while $r_{i} $ is the rotation angle of an $i$-th cell $\sigma _{i} $, respectively.

\subsubsection{Quantum Wire Area Minimization}
Objective function $F_{{\rm 2}} $ provides a minimization of the total quantum wire length of the $QG$ circuit. To achieve it we define a procedure that yields the minimized total quantum wire area, $w_{QG} $, of $QG$ as given by \eqref{eq7}. Let $\delta _{ij} $ be the effective width of the quantum wire $ij$ in the $QG$ circuit, defined as 
\begin{equation} \label{eq52} 
\delta _{ij} =\frac{\psi _{ij} }{J_{\max } \left(T_{ref} \right)h_{nom} } , 
\end{equation} 
 where $\psi _{ij} $ is the (root mean square) condensate wave function amplitude, $J_{\max } \left(T_{ref} \right)$ is the maximum allowed current density at a given reference temperature $T_{ref} $, while $h_{nom} $ is the nominal layer height. Since drops in the condensate wave function phase $\varphi _{ij} $ are also could present in the $QG$ circuit environment, the $\delta '_{ij} $ effective width of the quantum wire $ij$ can be rewritten as 
\begin{equation} \label{eq53} 
\delta '_{ij} =\frac{\psi _{ij} \ell _{eff} r_{0} \left(T_{ref} \right)}{\chi _{\varphi _{ij} } } , 
\end{equation} 
 where $\chi _{\varphi _{ij} } $ is a maximally allowed value for the phase drops, $\ell _{eff} $ is the effective length of the quantum wire, $\ell _{eff} \le \left(\chi _{\varphi _{ij} } \delta _{ij} \right)/\psi _{ij} r_{0} \left(T_{ref} \right),$ while $r_{0} \left(T_{ref} \right)$ is a conductor sheet resistance \cite{ref1, ref2, ref3, ref4, ref5}.

In a $G_{QG}^{k,r} $ multilayer topological representation of $QG$, the $\ell _{ij} $ distance between the quantum ports is as 
\begin{equation} \label{eq54} 
\ell _{ij} =\left|x_{i} -x_{j} \right|+\left|y_{i} -y_{j} \right|+\left|z_{i} -z_{j} \right|f_{l} , 
\end{equation} 
 where $f_{l} $ is a cost function between the layers of the multilayer structure of $QG$.

During the evaluation, let $w_{QG} \left(k\right)$ be the total quantum wire area of a particular net $k$ of the $QG$ circuit, 
\begin{equation} \label{eq55} 
w_{QG} \left(k\right)=\sum _{i=1}^{p} \sum _{j=1}^{q} \ell _{ij} \cdot \delta _{ij} \left(\psi _{ij} \right), 
\end{equation} 
 where $q$ quantum ports are considered as sources of condensate wave function amplitudes, while $p$ of $QG$ are sinks, thus \eqref{eq7} can be rewritten as 
\begin{equation} \label{eq56} 
F_{{\rm 2}} :w_{QG} =\min {\sum _{k=1}^{h} w_{QG} \left(k\right)}. 
\end{equation} 
 Since $\psi _{ij} $ is proportional to $\delta _{ij} \left(\psi _{ij} \right)$, \eqref{eq56} can be simplified as 
\begin{equation} \label{eq57} 
F_{{\rm 2}} :w'_{QG} =\min {\sum _{k=1}^{h} w'_{QG} \left(k\right)}, 
\end{equation} 
 where 
\begin{equation} \label{eq58} 
w'_{QG} \left(k\right)=\sum _{i=1}^{p} \sum _{j=1}^{q} \ell _{ij} \cdot \psi _{ij} , 
\end{equation} 
 where $\ell _{ij} $ is given in \eqref{eq54}.

In all quantum ports of a particular net $k$ of $QG$, the source quantum ports are denoted by positive sign \cite{ref24, ref25, ref26} in the condensate wave function amplitude, $\psi _{ij} $ assigned to quantum wire $ij$ between quantum ports $i$ and $j$ , while the sink ports are depicted by negative sign in the condensate wave function amplitude, $-\psi _{ij} $ with respect to a quantum wire $ij$ between quantum ports $i$ and $j$.

Thus the aim of $w_{QG} \left(k\right)$ in \eqref{eq55} is to determine a set of port-to-port connections in the $QG$ quantum circuit, such that the number of long connections is reduced in a particular net $k$ of $QG$ as much as possible. The result in \eqref{eq56} is therefore extends these requirements for all nets of $QG$.

\paragraph{Wave Function Amplitudes}
With respect to a particular quantum wire $ij$ between quantum ports $i$ and $j$ of $QG$, let $\psi _{i\to j} $ refer to the condensate wave function amplitude in direction $i\to j$, and let $\psi _{j\to i} $ refer to the condensate wave function amplitude in direction $j\to i$ in the quantum circuit. Then, the let be $\phi _{ij} $ defined for the condensate wave function amplitudes of quantum wire $ij$ as 
\begin{equation} \label{eq59} 
\phi _{ij} =\min {\left(\left|\psi _{i\to j} \right|,\left|\psi _{j\to i} \right|\right)}, 
\end{equation} 
 with a residual condensate wave function amplitude 
\begin{equation} \label{eq60} 
\xi _{i\to j} =\phi _{ij} -\psi _{i\to j} , 
\end{equation} 
 where $\psi _{i\to j} $ is an actual amplitude in the forward direction $i\to j$. Thus, the maximum amount of condensate wave function amplitude injectable to of quantum wire $ij$ in the forward direction $i\to j$ at the presence of $\psi _{i\to j} $ is $\xi _{i\to j} $ (see \eqref{eq60}). The following relations holds for a backward direction, $j\to i$, for the decrement of a current wave function amplitude $\psi _{i\to j} $ as 
\begin{equation} \label{eq61} 
\bar{\xi }_{j\to i} =-\psi _{i\to j} , 
\end{equation} 
 with residual quantum wire length 
\begin{equation} \label{eq62} 
\Gamma _{j\to i} =-\delta _{ij} , 
\end{equation} 
 where $\delta _{ij} $ is given in \eqref{eq52}.

By some fundamental assumptions, the ${\rm {\mathcal{N}}}_{R} $ residual network of $QG$ is therefore a network of the quantum circuit with forward edges for the increment of the wave function amplitude $\psi $, and backward edges for the decrement of $\psi $. To avoid the problem of negative wire lengths the Bellman-Ford algorithm \cite{ref24, ref25, ref26} can be utilized in an iterative manner in the residual directed graph of the $QG$ topology.

To find a path between all pairs of quantum gates in the directed graph of the $QG$ quantum circuit, the directed graph has to be strongly connected. The strong-connectivity of the $h$ nets with the parallel minimization of the connections of the $QG$ topology can be achieved by a minimum spanning tree method such as Kruskal's algorithm \cite{ref24, ref25, ref26}.
 
\begin{lemma}
The objective function $F_{{\rm 2}} $ is feasible in a multilayer $QG$ quantum circuit structure.
\end{lemma}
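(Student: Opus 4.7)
The plan is to reduce feasibility of $F_{2}$ to the existence of an optimal flow/shortest-path assignment on a well-posed network built over the multilayer grid $G_{QG}^{k,r}$. First, I would justify that it is enough to work with the simplified objective $w'_{QG}=\sum_{k}\sum_{i,j}\ell_{ij}\psi_{ij}$ from \eref{eq57}--\eref{eq58}: because $\delta_{ij}(\psi_{ij})$ is proportional to $\psi_{ij}$ via \eref{eq52}, the two minima are attained at the same configuration, so feasibility of one implies feasibility of the other. The distance term $\ell_{ij}$ from \eref{eq54} is always well defined on $G_{QG}^{k,r}$ since $f_{l}$ is a finite inter-layer cost, and hence the objective is bounded below on any fixed topology.

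Next, I would construct the residual network $\mathcal{N}_{R}$ of $QG$ with forward edges carrying capacity $\xi_{i\to j}=\phi_{ij}-\psi_{i\to j}$ from \eref{eq60} and backward edges carrying $\bar{\xi}_{j\to i}=-\psi_{i\to j}$ from \eref{eq61}, together with the residual wire-length term $\Gamma_{j\to i}=-\delta_{ij}$ from \eref{eq62}. The feasibility argument then splits into two reachability/optimality claims: (i) for every net $k$, there exists at least one set of port-to-port connections realizing $w'_{QG}(k)<\infty$; and (ii) the minimum over all such assignments is attained.

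For (i) I would invoke Kruskal's algorithm on the $h$ nets of the $QG$ topology to produce a minimum spanning tree that guarantees strong connectivity of the induced directed graph; this ensures every source--sink pair $(i,j)$ admits at least one realizable path and therefore provides a concrete feasible point. For (ii), since the residual network contains negative edge weights (from $\Gamma_{j\to i}$ and possibly from the inter-layer cost $f_{l}$), I would apply the Bellman--Ford algorithm iteratively, augmenting along successive shortest residual paths; termination and correctness follow provided $\mathcal{N}_{R}$ has no negative cycle. The absence of such a cycle is guaranteed by the sign convention in which source ports carry $+\psi_{ij}$ and sink ports carry $-\psi_{ij}$: a negative residual cycle would correspond to an unbounded circulating condensate amplitude, violating the maximum current-density constraint $J_{\max}(T_{ref})$ implicit in \eref{eq52} and the phase-drop bound $\chi_{\varphi_{ij}}$ in \eref{eq53}.

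The main obstacle I expect is this last point: ruling out negative residual cycles in the multilayer setting, where the inter-layer cost $f_{l}$ couples otherwise independent planar subproblems and can itself be signed after residualization. I would handle it by an inductive argument on augmenting steps, showing that the sign convention together with the physical bounds on $\psi_{ij}$ is preserved under each Bellman--Ford update, so that the invariant ``no negative cycle'' is maintained from the Kruskal-initialized feasible configuration to the final optimum. Combining (i) and (ii), the minimization in \eref{eq57} is attained, and hence $F_{2}$ is feasible on $G_{QG}^{k,r}$.
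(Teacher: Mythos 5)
Your proposal assembles exactly the same toolkit as the paper's proof: the reduction from \eref{eq7} to the simplified objective in \eref{eq57}--\eref{eq58} via the proportionality of $\delta_{ij}$ and $\psi_{ij}$, the residual network built from \eref{eq60}--\eref{eq62}, Bellman--Ford on that residual structure, and Kruskal's minimum spanning tree to enforce strong connectivity of the nets. The paper's proof is purely constructive -- it simply exhibits Method 2 (amplitude assignment in Sub-method 2.1, residual-network processing in Sub-method 2.2, strong connectivity in Sub-method 2.3) -- so in spirit and in components your argument is a reconstruction of it, with the minor difference that you run Kruskal first to get a feasible point and Bellman--Ford second for optimality, whereas the paper assigns amplitudes first and achieves strong connectivity last.

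There is, however, one genuine divergence, and it is the weakest step of your plan: the treatment of negative residual cycles. You assert that negative cycles cannot occur, justified by the current-density bound $J_{\max}(T_{ref})$ in \eref{eq52} and the phase-drop bound $\chi_{\varphi_{ij}}$ in \eref{eq53}. This is not sustainable: the residual network has negative edge weights $\Gamma_{j\to i}=-\delta_{ij}$ \emph{by construction}, and the physical bounds constrain only the magnitudes of $\delta_{ij}$ and $\psi_{ij}$, not the sign structure of cycles; indeed the paper's own Sub-method 2.2 explicitly anticipates $N_{\bar{C}}>0$ and handles it by \emph{canceling} the detected negative cycles (updating the $\psi_{ij}$ and re-evaluating \eref{eq60}--\eref{eq62} until none remain), i.e., a cycle-canceling argument rather than a cycle-exclusion argument. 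Your fallback -- an induction showing the no-negative-cycle invariant is preserved under successive shortest-path augmentations -- also does not close the gap as stated, because that invariant only holds when the initial flow is already minimum-cost for its value (e.g., the zero flow); an arbitrary Kruskal-initialized feasible configuration need not satisfy this, so the first residual graph you build may already contain negative cycles that shortest-path augmentation will never remove. The repair is precisely what the paper does: accept that negative cycles arise, detect them with Bellman--Ford, and cancel them iteratively; optimality of the final assignment then follows from the standard optimality criterion that a flow is minimum-cost if and only if its residual network contains no negative cycle.
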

\begin{proof}
 The procedure defined for the realization of objective function $F_{{\rm 2}} $ on a $QG$ quantum circuit is summarized in Method 2. The proof assumes a superconducting architecture.
 
\setcounter{algocf}{1}
\begin{proced}
  \DontPrintSemicolon
\caption{\textit{Implementation of Objective Function $F_2$}}
\textbf{Step 1}. Assign the $\psi _{ij} $ condensate wave function amplitudes for all $ij$ quantum wires of $QG$ via Sub-method 2.1.

\textbf{Step 2}. Determine the residual network of $QG$ via Sub-method 2.2.

\textbf{Step 3}. Achieve the strong connectivity of $QG$ via Sub-method 2.3.

\textbf{Step 4}. Output the $QG$ quantum circuit topology such that $w_{QG} $ \eqref{eq7} is minimized.  
\end{proced}

The sub-procedures of Method 2 are detailed in Sub-methods 2.1, 2.2 and 2.3.  

\setcounter{algocf}{0}
\begin{subproc2}
  \DontPrintSemicolon
\caption{}
\textbf{Step 1}. Create a ${\rm M}_{QG} $ multilayer topological map of the network ${\rm {\mathcal{N}}}$ of $QG$ with the quantum gates and ports.

\textbf{Step 2}. From ${\rm M}_{QG} $ determine the $L_{c} $ connection list of ${\rm {\mathcal{N}}}$ in $QG$.

\textbf{Step 3}. Determine the $\delta _{ij} $ the effective width of the quantum wire $ij$ via \eqref{eq52}, for $\forall ij$ wires.

\textbf{Step 4}. Determine $\phi _{ij} $ via \eqref{eq59} for all quantum wires $ij$ of the $QG$ circuit.

\textbf{Step 5}. For a $k$-th net of $QG$, assign the wave function amplitude values $\psi _{ij} $ to $\forall ij$ quantum wires such that $w_{QG} \left(k\right)$ in \eqref{eq55} is minimized, with quantum wire length $\ell _{ij} $ \eqref{eq54}.  
\end{subproc2}

\begin{subproc2}
  \DontPrintSemicolon
\caption{}
\textbf{Step 1}. Create a $\bar{{\rm M}}_{QG} $ multilayer topological map of the ${\rm {\mathcal{N}}}_{R} $ residual network of $QG$.

\textbf{Step 2}. From $\bar{{\rm M}}_{QG} $ determine the $\bar{L}_{c} $ connection list of the ${\rm {\mathcal{N}}}_{R} $ residual network of $QG$.

\textbf{Step 3}. For $\forall i\to j$ forward edges of $\bar{{\rm M}}_{QG} $ of ${\rm {\mathcal{N}}}_{R} $, compute the $\xi _{i\to j} $ residual condensate wave function amplitude \eqref{eq60}, and for $\forall j\to i$ backward edges of $\bar{{\rm M}}_{QG} $, compute the quantity $\bar{\xi }_{j\to i} $ via \eqref{eq61}.

\textbf{Step 4}. Compute the residual negative quantum wire length $\Gamma _{j\to i} $ via \eqref{eq62}, using $\delta _{ij} $ from \eqref{eq52}.

\textbf{Step 5}. Determine the $\bar{C}$ negative cycles in the ${\bar{{\rm M}}_{QG}} $ of the ${\rm {\mathcal{N}}}_{R} $ residual network of $QG$ via the ${\rm {\mathcal{A}}}_{BF} $ Bellman-Ford algorithm \cite{ref24, ref25, ref26}.

\textbf{Step 6}. If $N_{\bar{C}} >0$, where $N_{\bar{C}} $ is the number of $\bar{C}$ negative cycles in $\bar{{\rm M}}_{QG} $, then update the $\psi _{ij} $ wave function amplitudes of the quantum wires $ij$ in the to cancel out the negative cycles.

\textbf{Step 7}. Re-calculate the values of \eqref{eq60}, \eqref{eq61} and \eqref{eq62} for the residual edges of ${\rm {\mathcal{N}}}_{R} $.

\textbf{Step 8}. Repeat steps 5-7, until $N_{\bar{C}} >0$.  
\end{subproc2}

\begin{subproc2}
  \DontPrintSemicolon
\caption{}
\textbf{Step 1}. For an $i$-th $sn_{k,i} $ subnet of a net $k$ of the $QG$ quantum circuit, set the quantum wire length to zero, $\delta _{ij} =0$ between quantum ports $i$ and $j$, for all $\forall i$.

\textbf{Step 2}. Determine the $L{\rm 2}$ (Euclidean) distance between the quantum ports of the subnets $sn_{k,i} $ (from each quantum port of a subnet to each other quantum port of all remaining subnets \cite{ref24}).

\textbf{Step 3}. Weight the $\delta _{ij} >0$ non-zero quantum wire lengths by the calculated $L{\rm 2}$ distance between the connections of the subnets of the $QG$ quantum circuit \cite{ref1, ref2, ref3, ref4, ref5}, \cite{ref24, ref25, ref26}.

\textbf{Step 4}. Determine the minimum spanning tree ${\rm {\mathcal{T}}}_{QG} $ via the ${\rm {\mathcal{A}}}_{K} $ Kruskal algorithm \cite{ref24}.

\textbf{Step 5}. Determine the set $S_{{\rm {\mathcal{T}}}_{QG} } $ of quantum wires with $\delta _{ij} >0$ from ${\rm {\mathcal{T}}}_{QG} $. Calculate $\delta _{S_{{\rm {\mathcal{T}}}_{QG} } } =\max \left(\delta _{ij} ,\delta '_{ij} ,\delta _{0} \right)$, where $\delta _{0} $ is the minimum width can be manufactured, while $\delta _{ij} $ and $\delta '_{ij} $ are given in \eqref{eq52} and \eqref{eq53}.

\textbf{Step 6}. Add the quantum wires of $S_{{\rm {\mathcal{T}}}_{QG} } $ to the ${\rm M}_{QG} $ multilayer topological map of the network ${\rm {\mathcal{N}}}$ of $QG$.

\textbf{Step 7}. Repeat steps 4-6 for $\forall k$ nets of the $QG$ quantum circuit, until ${\rm M}_{QG} $ is not strongly connected.  
\end{subproc2}

These conclude the proof.
\end{proof}

\subsubsection{Processing in the Multilayer Structure}

The $G_{QG}^{k,z} $ grid consists of all $g_{i} $ quantum gates of $QG$ in a multilayer structure, such that the $g_{i,k}^{l_{z} } $ appropriate ports of the quantum gates are associated via an directed graph ${\rm {\rm G}}=\left(V,E,f_{c} \right)$, where $V$ is the set of ports, $g_{i,k}^{l_{z} } \subseteq V$, $E$ is the set of edges, and $f_{c} $ is a cost function, to achieve the gate-to-gate connectivity.

As a hardware restriction we use a constraint on the quantum gate structure, it is assumed in the model that a given quantum system cannot participate in more than one quantum gate at a particular time.

The distance in the rectilinear grid $G_{QG}^{k,z} $ of $QG$ is measured by the $d_{{\rm L1}} \left(\cdot \right)$ ${\rm L1}$-distance function. Between two network ports $x,y\in V$, $x=\left(j,k\right)$, $y=\left(m,o\right)$, $d_{{\rm L1}} \left(\cdot \right)$ is as 
\begin{equation} \label{eq63} 
d_{{\rm L1}} \left(x,y\right)=d_{{\rm L1}} \left(\left(j,k\right),\left(m,o\right)\right)=\left|m-j\right|+\left|o-k\right|. 
\end{equation} 
 The quantum port selection in the $G_{QG}^{k,r} $ multilayer structure of $QG$, with $r$ layers $l_{z} $, $z=1,\ldots ,r$, and $k=2$ dimension in each layers is illustrated in \fref{figA1}.

\begin{center}
\begin{figure*}[htbp]
%\vspace{-0.6cm}
\begin{center}
\includegraphics[angle = 0,width=1\linewidth]{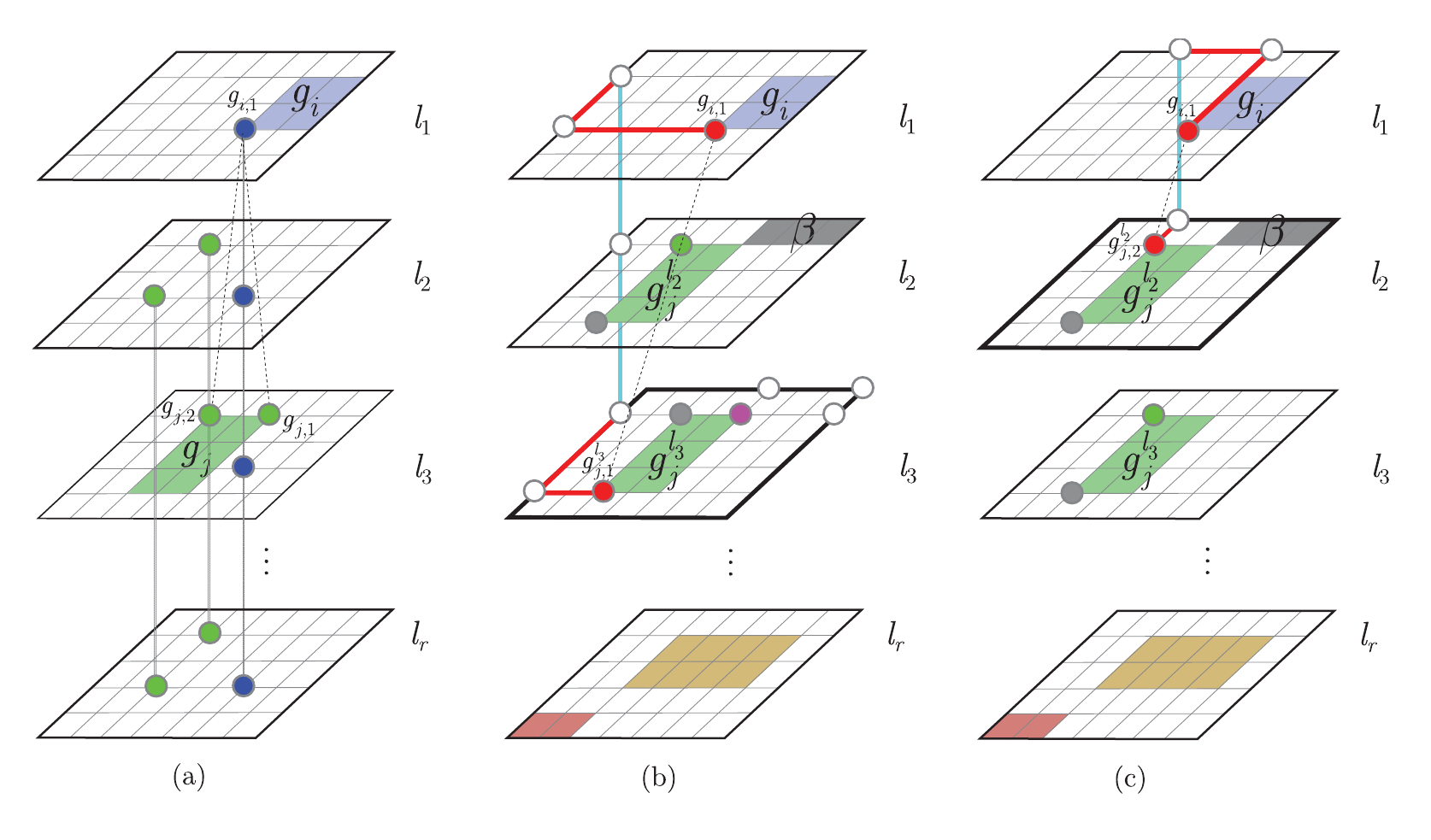}
\caption{The method of port allocation of the quantum gates in the $G_{QG}^{k,r} $ multilayer structure, with $r$ layers $l_{z} $, $z=1,\ldots ,r$, and $k=2$ dimension in each layers. The aim of the multiport selection is to find the shortest path between ports of quantum gates $g_{i} $ (blue rectangle) and $g_{j} $ (green rectangle) in the $G_{QG}^{{\rm 2,}r} $ multilayer structure. (a): The quantum ports needed to be connected in $QG$ are port $g_{i{\rm ,1}} $ in quantum gate $g_{i} $ in layer $l_{{\rm 1}} $, and ports $g_{j{\rm ,1}} $ and $g_{j{\rm ,2}} $ of quantum gate $g_{j} $ in layer $l_{{\rm 3}} $. (b): Due to a hardware restriction on quantum computers, the quantum gates are applied in several rounds in the different layers of the quantum circuit $QG$. Quantum gate $g_{j} $ is applied in two rounds in two different layers that is depicted $g_{j}^{l_{{\rm 3}} } $ and $g_{j}^{l_{{\rm 2}} } $. For the layer-$l_{{\rm 3}} $ quantum gate $g_{j}^{l_{{\rm 3}} } $, the active port is $g_{j{\rm ,1}}^{l_{{\rm 3}} } $ (red), while the other port is not accessible (gray) in $l_{{\rm 3}} $. The $g_{j{\rm ,1}}^{l_{{\rm 3}} } $ port, due to a physical-layer blockage $\beta $ in the quantum circuit of the above layer $l_{{\rm 2}} $ does not allow to minimize the path cost between ports $g_{i{\rm ,1}} $ and $g_{j{\rm ,1}}^{l_{{\rm 3}} } $. The target port $g_{j{\rm ,1}}^{l_{{\rm 3}} } $ is therefore referred to as a blocked port (depicted by pink), and a new port of is $g_{j}^{l_{{\rm 3}} } $ selected for $g_{j{\rm ,1}}^{l_{{\rm 3}} } $ (new port depicted by red). (c): For the layer-$l_{{\rm 2}} $ quantum gate $g_{j}^{l_{{\rm 2}} } $, the active port is $g_{j{\rm ,2}}^{l_{{\rm 2}} } $ (red), while the remaining port is not available (gray) in $l_{{\rm 2}} $. The white dots (vertices) represent auxiliary ports in the grid structure of the quantum circuit. In $G_{QG}^{{\rm 2,}r} $, each vertices could have a maximum of 8 neighbors, thus for a given port $g_{j,k} $ of a quantum gate $g_{j} $, ${\rm deg}\left(g_{j,k} \right)\le {\rm 8}$.} 
 \label{figA1}
 \end{center}
\end{figure*}
\end{center}

\paragraph{Algorithm}
\begin{theorem}
The Quantum Shortest Path Algorithm finds shortest paths in a multilayer $QG$ quantum circuit structure.
\end{theorem}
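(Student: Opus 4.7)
The plan is to prove correctness in three stages: first, recast the multilayer grid as a single weighted directed graph with non-negative edge costs; second, apply a Dijkstra-style relaxation to this graph; third, show that the hardware restrictions reduce to a vertex/edge deletion that preserves Dijkstra's optimality invariant.

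First I would enumerate the vertex set $V$ of the directed graph representation of $G_{QG}^{k,r}$ as the disjoint union of the quantum ports $g_{i,k}^{l_z}$ and the auxiliary grid vertices illustrated in \fref{figA1}, and the edge set $E$ as the union of intra-layer rectilinear adjacencies, weighted by the unit step of the L1 distance $d_{L1}$ from \eqref{eq63}, together with inter-layer adjacencies weighted by $f_l$ as in \eqref{eq54}. All weights are non-negative, so the composite cost function $f_c$ defines a metric on $V$, which is the hypothesis needed for Dijkstra correctness. In two dimensions the $\mathrm{deg}(g_{j,k}) \le 8$ bound on each vertex keeps the graph sparse and controls the size of $E$.

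Next I would run a standard Dijkstra relaxation from the source port $g_{i,1}$, and establish by the usual inductive argument on the order of vertex extraction that when a vertex $v$ is finalized, its tentative label equals the true shortest-path cost in $(V,E,f_c)$. Termination and optimality are then standard; the non-standard aspect is that the graph is not a plain lattice, because multi-round gate applications duplicate a gate across several layers, and a single quantum system cannot participate in more than one quantum gate at a time.

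The hard part will be showing that the hardware restrictions do not break the correctness argument. Specifically, a quantum gate $g_j$ executed in rounds $l_2$ and $l_3$ appears as two vertex copies $g_j^{l_2}$ and $g_j^{l_3}$ with port-level accessibility depending on the round, and physical-layer blockages $\beta$ render some ports unreachable (see \fref{figA1}(b)). I would handle this by passing to the subgraph $G' \subseteq G_{QG}^{k,r}$ obtained by deleting inactive ports and blockage cells, and then verifying that shortest paths in $G'$ under the inherited cost function are exactly the feasible shortest paths sought by the algorithm. The rerouting of a blocked target port $g_{j,1}^{l_3}$ to an alternative admissible port of the same gate, as in \fref{figA1}, then arises automatically from Dijkstra's selection rule on $G'$. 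The overall complexity is bounded by the standard $\mathcal{O}(|E| + |V|\log|V|)$ Dijkstra cost applied to the enlarged vertex set that accounts for all $r$ layers and all rounds of every multi-round gate.
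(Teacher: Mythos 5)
Your proposal is correct in substance, but it proves the theorem by a genuinely different route than the paper. The paper's proof is essentially the algorithm listing itself: QSPA is defined as an $A^{*}$ parallel search over the multilayer grid with evaluation function $f_{c}\left(x,y\right)=\gamma \left(x,y\right)+d_{{\rm L1}} \left(x,y\right)$, where $\gamma$ is the accumulated real path cost and $d_{{\rm L1}}$ from \eqref{eq63} serves as the heuristic; blockages and multi-round gate decompositions are handled by annotating the path list $L_{{\rm {\mathcal{P}}}\in {\rm {\rm Q}{\rm G}}}$, and both correctness and complexity are deferred to the known properties of $A^{*}$. You instead build the weighted directed graph explicitly (ports plus auxiliary vertices, unit intra-layer edges, $f_{l}$-weighted inter-layer edges), run plain Dijkstra, prove the extraction-order invariant inductively, and absorb the hardware restrictions by deleting inactive ports and blocked cells and arguing on the resulting subgraph $G'$. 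Your route is more elementary and self-contained---it needs only non-negativity of edge weights, with no appeal to heuristic admissibility---whereas the paper's $A^{*}$ formulation buys a guided search that is faster in practice on large grids. Two remarks: first, to make your argument certify the paper's \emph{actual} algorithm rather than a Dijkstra surrogate, you should add the (easy) observation that $d_{{\rm L1}}$ is an admissible and consistent heuristic on a rectilinear grid with unit steps and non-negative layer cost $f_{l}$, so $A^{*}$ with this heuristic finalizes vertices with their true shortest-path costs exactly as Dijkstra does; second, your statement that non-negative weights make $f_{c}$ ``a metric on $V$'' is imprecise---the shortest-path distance in a directed graph is only a quasi-metric (symmetry can fail), and the hypothesis Dijkstra actually needs is just non-negativity of the edge weights, which is what your construction provides.
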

\begin{proof}
The steps of the shortest path determination between the ports of the quantum gates in a multilayer structure are included in Algorithm 2. 

\setcounter{algocf}{1}
\begin{algo}
  \DontPrintSemicolon
\caption{\textit{Quantum Shortest Path Algorithm (QSPA)}}
\textbf{Step 1}. Create the $G_{QG}^{k,r} $ multilayer structure of $QG$, with $r$ layers $l_{z} $, $z=1,\ldots ,r$, and $k$ dimension in each layers. From $G_{QG}^{k,r} $ generate a list $L_{{\rm {\mathcal{P}}}\in {\rm {\rm Q}{\rm G}}} $ of the paths between each start quantum gate port to each end quantum gate port in the $G_{QG}^{k,r} $ structure of $QG$ quantum circuit.

\textbf{Step 2}. Due to the hardware restrictions of quantum computers, add the decomposed quantum gate port information and its layer information to $L_{{\rm {\mathcal{P}}}\in {\rm {\rm Q}{\rm G}}} $. Add the $\beta $ physical-layer blockage information to $L_{{\rm {\mathcal{P}}}\in {\rm {\rm Q}{\rm G}}} $.

\textbf{Step 3}. For a quantum port pair $\left(x,y\right)\in G_{QG}^{k,r} $ define the $f_{c} \left(x,y\right)$ cost function, as 
\[f_{c} \left(x,y\right)=\gamma \left(x,y\right)+d_{{\rm L1}} \left(x,y\right),\] 
where $\gamma \left(x,y\right)$ is the real path size from $x$ to $y$ in the multilayer grid structure $G_{QG}^{k,r} $ of $QG$, while $d_{{\rm L1}} \left(x,y\right)$ is the ${\rm L1}$ distance in the grid structure as given by \eqref{eq63}.

\textbf{Step 4}. Using $L_{{\rm {\mathcal{P}}}\in {\rm {\rm Q}{\rm G}}} $ and cost function $f_{c} \left(x,y\right)$, apply the $A^{{\rm *}} $ parallel search \cite{ref24, ref25, ref26} to determine the lowest cost path ${\rm {\mathcal{P}}}^{{\rm *}} \left(x,y\right)$.  
\end{algo}

\end{proof}

\paragraph{Complexity Analysis}
The complexity analysis of Algorithm 2 is as follows. Since the QSPA algorithm (Algorithm 2) is based on the $A^{{\rm *}} $ search method \cite{ref24, ref25, ref26}, the complexity is trivially yielded by the complexity of the $A^{{\rm *}} $ search algorithm.

\section{Performance Evaluation}
\label{sec5}
In this section, we compare the performance of the proposed QTAM method with a multiobjective evolutionary algorithm called NSGA-II \cite{com1}. We selected this multiobjective evolutionary algorithm for the comparison, since the method can be adjusted for circuit designing. 

The computational complexity of NSGA-II is proven to be ${\rm {\mathcal O}}\left(N_{it} N_{obj} \left|{\rm {\mathcal P}}\right|^{2} \right)$ in general, while at an optimized nondominated procedure, the complexity can be reduced to ${\rm {\mathcal O}}\left(N_{it} N_{obj} \left|{\rm {\mathcal P}}\right|\log _{2} \left|{\rm {\mathcal P}}\right|\right)$. We take into consideration both situations for a comparison. The complexity of QTAM is given in \eqref{eq36}.

The complexity of the methods in terms of the number of iterations, $N_{O}$, is compared in \fref{figA2}. The performance of QTAM is depicted in \fref{figA2}(a), while \fref{figA2}(b) and \fref{figA2}(c) illustrate the performances of the NSGA-II and optimized NSGA-II, respectively.

For the comparison, the $N_{obj} $ parameter is set to $N_{obj} =5$, while for the QTAM method, $N_{d} $ is set to $N_{d} =3$. 

\begin{center}
\begin{figure*}[htbp]
%\vspace{-0.6cm}
\begin{center}
\includegraphics[angle = 0,width=1\linewidth]{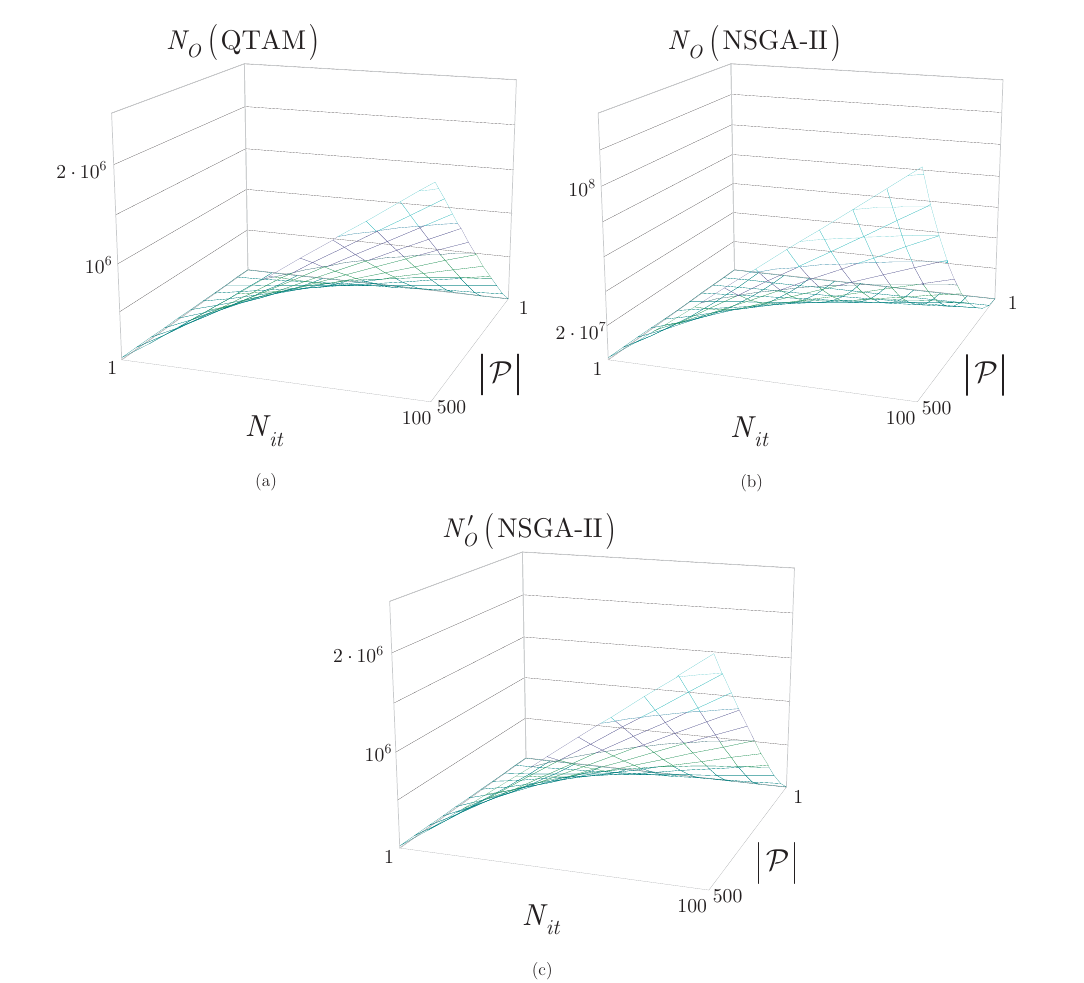}
\caption{(a): The computational complexity ($N_{O} $: number of operations) of QTAM in function of $N_{it} $ and $\left|{\rm {\mathcal P}}\right|$,  $N_{it} \in \left[1,100\right]$, $\left|{\rm {\mathcal P}}\right|\in \left[1,500\right]$. (b): The computational complexity of the NSGA-II method in function of $N_{it} $ and $\left|{\rm {\mathcal P}}\right|$,  $N_{it} \in \left[1,100\right]$, $\left|{\rm {\mathcal P}}\right|\in \left[1,500\right]$. (c): The computational complexity of the optimized NSGA-II in function of $N_{it} $ and $\left|{\rm {\mathcal P}}\right|$,  $N_{it} \in \left[1,100\right]$, $\left|{\rm {\mathcal P}}\right|\in \left[1,500\right]$.} 
 \label{figA2}
 \end{center}
\end{figure*}
\end{center}

In the analyzed range, the maximized values of $N_{O} $ are $N_{O} \left({\rm QTAM}\right)\approx 2\cdot 10^{6} $, $N_{O} (\text{NSGA-II})\approx 1.25\cdot 10^{8} $, and for the optimized NSGA-II scenario, $N'_{O} \left({\text{NSGA-II}}\right)\approx 2.25\cdot 10^{6} $, respectively. In comparison to NSGA-II, the complexity of QTAM is significantly lower. Note, while the performance of QTAM and the optimized NSGA-II is closer, QTAM requires no any optimization of the complexity of the nondominated procedure.

\section{Conclusions}
\label{sec6}
The algorithms and methods presented here provide a framework for quantum circuit designs for near term gate-model quantum computers. Since our aim was to define a scheme for present and future quantum computers, the developed algorithms and methods were tailored for arbitrary-dimensional quantum systems and arbitrary quantum hardware restrictions. We demonstrated the results through gate-model quantum computer architectures; however, due to the flexibility of the scheme, arbitrary implementations and input constraints can be integrated into the quantum circuit minimization. The objective function that is the subject of the maximization in the method can also be selected arbitrarily. This allows a flexible implementation to solve any computational problem for experimental quantum computers with arbitrary hardware restrictions and development constraints. 

%\section*{Statements}
%\subsection*{Ethics statement}
%This work did not involve any active collection of human data.
%\subsection*{Data accessibility statement}
%This work does not have any experimental data.
%\subsection*{Competing financial interests statement}
%We have no competing financial interests.
%\subsection*{Competing interests statement}
%We have no competing interests.
%\subsection*{Funding}
%No relevant funding. 
%\subsection*{Authors’ contributions}
%L.GY. designed the protocol and wrote the manuscript. L.GY. and S.I. analyzed the results. All authors reviewed the manuscript.

\section*{Acknowledgements}
This work was partially supported by the European Research Council through the Advanced Fellow Grant, in part by the Royal Society’s Wolfson Research Merit Award, in part by the Engineering and Physical Sciences Research Council under Grant EP/L018659/1, by the Hungarian Scientific Research Fund - OTKA K-112125 and in part by the Engineering and Physical Sciences Research Council under Grant EP/L018659/1.

\newpage
\appendix
\setcounter{table}{0}
\setcounter{figure}{0}
\setcounter{equation}{0}
\setcounter{algocf}{0}
\renewcommand{\thetable}{\Alph{section}.\arabic{table}}
\renewcommand{\thefigure}{\Alph{section}.\arabic{figure}}
\renewcommand{\theequation}{\Alph{section}.\arabic{equation}}
\renewcommand{\thealgocf}{\Alph{section}.\arabic{algocf}}

\setlength{\arrayrulewidth}{0.1mm}
\setlength{\tabcolsep}{5pt}
\renewcommand{\arraystretch}{1.5}
\section{Appendix}

\subsection{Abbreviations}
\begin{description}
\item[EDA] Electronic Design Automation
\item[IC] Integrated Circuit
\item[QG] Quantum Gate
\item[QSPA] Quantum Shortest Path Algorithm
\item[QTAM] Quantum Triple Annealing Minimization
\item[SA] Stimulated Annealing
\item[VLSI] Very-Large-Scale Integration
\end{description}

\subsection{Notations}
The notations of the manuscript are summarized in \tref{tab1}.
\begin{center}
\begin{longtable}{||l|p{4.5in}||}
\caption{Summary of notations.}
\label{tab1}
\endfirsthead
\endhead
\hline
\textit{Notation} & \textit{Description} \\ \hline
$d$  &  Dimension of the quantum system, $d=2$ for a qubit system.  \\ \hline 
$H$  &  Hamiltonian operator.  \\ \hline 
$QG$  &  Reduced quantum circuit.  \\ \hline 
$QG_{R} $  &  Reference (non-reduced) quantum circuit.  \\ \hline 
$P_{QG} $  &  Output distribution of $QG$.  \\ \hline 
$P_{QG_{R} } $  &  Output distribution of $QG_{R} $.  \\ \hline 
$n$  &  Number of input quantum systems  \\ \hline 
$F_{i} $  &  An $i$-th objective function.  \\ \hline 
$s$  &  A solution.  \\ \hline 
$f\left(s\right)$  &  The relative performance of a solution $s$.  \\ \hline 
$T$  &  Control parameter in the SA method.  \\ \hline 
$R$  &  Temperature decreasing rate in SA.  \\ \hline 
$t$  &  Iteration counter for the SA method.  \\ \hline 
$k$  &  Scaling factor.  \\ \hline 
$T_{\max } $  &  Initial temperature in the SA framework.  \\ \hline 
$d\left(f\right)$, $d\left(g\right)$, $d\left(c\right)$  &  Objective, constraint and distribution closeness domination functions.  \\ \hline 
$\tilde{d}\left(f\right)$, $\tilde{d}\left(g\right)$, $\tilde{d}\left(c\right)$  &  Average values of objective, constraint and distribution closeness domination functions.  \\ \hline 
${\rm \mathbf{x}}$  &  A vector of design variables.  \\ \hline 
$\alpha $  &  A vector of weights.  \\ \hline 
$N_{obj} $  &  Number of objectives in the optimization procedure.  \\ \hline 
$F_{s} $  &  A single-objective function.  \\ \hline 
$A_{QG} $  &  Quantum circuit area of the $QG$ quantum gate structure.  \\ \hline 
$H'_{QG} $  &  Optimal circuit height of $QG$.  \\ \hline 
$D'_{QG} $  &  Optimal depth of $QG$.  \\ \hline 
$w_{QG} $  &  Total quantum wire area of $QG$.  \\ \hline 
$h$  &  Number of nets of the $QG$ circuit.  \\ \hline 
$p$  &  Number of quantum ports of the $QG$ quantum circuit considered as sources of a condensate wave function amplitude.  \\ \hline 
$q$  &  Number of quantum portsf considered as sinks of a condensate wave function amplitude.  \\ \hline 
$\ell _{ij} $  &  Length of the quantum wire $ij$.  \\ \hline 
$\delta _{ij} $  &  Effective width of the quantum wire $ij$.  \\ \hline 
$\psi _{ij} $  &  Condensate wave function amplitude associated to the quantum wire $ij$.  \\ \hline 
$\vec{\Phi }$  &  A collection of $L$ parameters $\vec{\Phi }=\Phi _{{\rm 1}} ,\ldots ,\Phi _{L} $.  \\ \hline 
$| \vec{\Phi }\rangle  $  &  A system state of the quantum computer, $| \vec{\Phi }\rangle  =U_{L} \left(\Phi _{L} \right),\ldots ,U_{{\rm 1}} \left(\Phi _{{\rm 1}} \right)\left| \varphi \right\rangle  $, where $U_{i} $ is an $i$-th unitary that depends on a set of parameters $\Phi _{i} $, while $\left| \varphi \right\rangle  $ is an initial state.  \\ \hline 
$m$  &  Total number of measurements in the $M$ measurement block, $m=N_{M} \left|M\right|$, where $N_{M} $ is the number of measurement rounds, $\left|M\right|$ is the number of measurement gates in the $M$ measurement block.  \\ \hline 
$c_{s} \left(\cdot \right)$  &  Sum of distribution closeness violation values.  \\ \hline 
$D\left(\left. \cdot \right\| \cdot \right)$  &  Relative entropy function.  \\ \hline 
$d_{x,y} \left(c\right)$  &  Distribution closeness dominance function for solutions $x$ and $y$.  \\ \hline 
$v_{i}^{c} $  &  An $i$-th distribution closeness violation value.  \\ \hline 
$N_{v} $  &  The number of distribution closeness violation values for a solution $z$.  \\ \hline 
$g_{s} \left(\cdot \right)$  &  Sum of all constraint violation values.  \\ \hline 
$v_{i}^{g} $  &  An $i$-th constraint violation value.  \\ \hline 
$N_{g} $  &  Number of constraint violation values for a solution $z$.  \\ \hline 
$d_{x,y} \left(f\right)$  &  Objective dominance function.  \\ \hline 
$R_{i} $  &  Range of an objective $i$.  \\ \hline 
${\rm {\mathcal{A}}}$  &  Archive.  \\ \hline 
$\xi $  &  A random solution from ${\rm {\mathcal{A}}}$.  \\ \hline 
$\nu $  &  Parameter, $\nu =\Xi \left(\xi \right)$, where $\Xi \left(\cdot \right)$ is a moving operator.  \\ \hline 
${\rm {\mathcal{D}}}_{P} \left(\cdot \right)$  &  Constrained Pareto dominance checking function.  \\ \hline 
$\angle $  &  Pareto dominance operator; for $\nu \angle \xi $, $\xi $ dominates $\nu $.  \\ \hline 
$N_{it} $  &  Total number of iterations.  \\ \hline 
$N_{d} $  &  Number of dominance measures.  \\ \hline 
$\left|{\rm {\mathcal{P}}}\right|$  &  Population size.  \\ \hline 
$N_{obj} $  &  Number of objectives.  \\ \hline 
$g_{i} $  &  An $i$-th quantum gate of $QG$.  \\ \hline 
$g_{i,k} $  &  A $k$-th port of the quantum gate $g_{i} $ of the quantum circuit.  \\ \hline 
$G_{QG}^{k,r} $  &  A multilayer, $k$-dimensional $n$-sized finite square-lattice base-graph rectilinear grid, where $r$ is the number of layers, $l_{z} $, $z=1,\ldots ,r$.  \\ \hline 
$g_{i}^{l_{z} } $  &  A quantum gate $g_{i} $ in the $z$-th layer $l_{z} $ of $G_{QG}^{k,r} $.  \\ \hline 
$g_{i,k}^{l_{z} } $  &  A $k$-th port of $g_{i}^{l_{z} } $ in $G_{QG}^{k,r} $.  \\ \hline 
$C\left(z\right)$  &  Objective function of a computational problem, $z$ is a bitstring that encodes the state of the quantum circuit.  \\ \hline 
$C_{\left\langle i,j\right\rangle } $  &  Objective function for an edge of $G_{QG}^{k,r} $ that connects quantum ports $i$ and $j$.  \\ \hline 
$z_{i} $  &  Parameter, $z_{i} =\pm 1$.  \\ \hline 
$C^{{\rm *}} \left(z\right)$  &  Maximized objective function.  \\ \hline 
$U$  &  Unitary operation of the quantum computer.  \\ \hline 
$\sigma _{x} $  &  Pauli $X$-operator.  \\ \hline 
$\mu $  &  Control parameter.  \\ \hline 
$\gamma $  &  Single parameter.  \\ \hline 
$\ell _{ij} $  &  Distance between the quantum ports in $G_{QG}^{k,r} $.  \\ \hline 
$E_{L} \left(\vec{\Phi }\right)$  &  Energy $E_{L} \left(\vec{\Phi }\right)$ of the Hamiltonian at a system state $\vec{\Phi }$.  \\ \hline 
$\Delta $  &  Separation point in $G_{QG}^{k,r} $ of the quantum circuit.  \\ \hline 
$\beta $  &  A physical-layer blockage in the actual layer of the quantum circuit.  \\ \hline 
${\rm {\mathcal{P}}}$  &  A path between the quantum ports of the quantum circuit.  \\ \hline 
$S_{v} \left(P_{i} \right)$  &  A vertical symmetry axis of a proximity group $P_{i} $ on $QG$.  \\ \hline 
$x_{S_{v} \left(P_{i} \right)} $  &  The $x$-coordinate of $S_{v} \left(P_{i} \right)$.  \\ \hline 
$\sigma _{i} $  &  A cell in the grid of the quantum circuit.  \\ \hline 
$x_{i} $  &  A bottom-left $x$ coordinate of a cell $\sigma _{i} $ in the grid of the quantum circuit.  \\ \hline 
$\kappa _{i} $  &  Width of a cell $\sigma _{i} $ in the grid of the quantum circuit.  \\ \hline 
$\left(\sigma ^{{\rm 1}} ,\sigma ^{{\rm 2}} \right)$  &  Symmetry pair.  \\ \hline 
${\rm \mathbf{x}}_{F_{{\rm 1}} }^{d} $  &  A $d$-dimensional constraint vector with the symmetry considerations.  \\ \hline 
$N_{\left(\sigma ^{{\rm 1}} ,\sigma ^{{\rm 2}} \right)} $  &  Number of $\left(\sigma ^{{\rm 1}} ,\sigma ^{{\rm 2}} \right)$ symmetry pairs in $G_{QG}^{k,r} $.  \\ \hline 
$N_{\sigma ^{S} } $  &  Number of $\sigma ^{S} $-type cells in $G_{QG}^{k,r} $.  \\ \hline 
$N_{\sigma ^{0} } $  &  Number of $\sigma ^{0} $-type cells in $G_{QG}^{k,r} $.  \\ \hline 
$r_{i} $  &  Rotation angle of an $i$-th cell $\sigma _{i} $ in $G_{QG}^{k,r} $.  \\ \hline 
$\delta _{ij} $  &  Effective width of the quantum wire $ij$ in the $QG$ circuit.  \\ \hline 
$J_{\max } \left(T_{ref} \right)$  &  Maximum allowed current density at a given reference temperature $T_{ref} $.  \\ \hline 
$h_{nom} $  &  A nominal layer height.  \\ \hline 
$\delta '_{ij} $  &  Effective width of the quantum wire $ij$.  \\ \hline 
$\chi _{\varphi _{ij} } $  &  Maximally allowed value for the phase drops.  \\ \hline 
$\ell _{eff} $  &  Effective length of the quantum wire, $\ell _{eff} \le \left(\chi _{\varphi _{ij} } \delta _{ij} \right)/\psi _{ij} r_{0} \left(T_{ref} \right),$ where $r_{0} \left(T_{ref} \right)$ is a conductor sheet resistance.  \\ \hline 
$\ell _{ij} $  &  Distance between the quantum ports in $G_{QG}^{k,r} $, where $f_{l} $ is a cost function between the layers of the multilayer structure of $QG$.  \\ \hline 
$w_{QG} \left(k\right)$  &  Total quantum wire area of a particular net $k$ of the $QG$ circuit.  \\ \hline 
$\psi _{i\to j} $  &  Condensate wave function amplitude in direction $i\to j$ between the quantum ports.  \\ \hline 
$\psi _{j\to i} $  &  Condensate wave function amplitude in direction $j\to i$ between the quantum ports.  \\ \hline 
$\xi _{i\to j} $  &  Residual condensate wave function amplitude, $\xi _{i\to j} =\phi _{ij} -\psi _{i\to j} $, where $\phi _{ij} =\min \left(\left|\psi _{i\to j} \right|,\left|\psi _{j\to i} \right|\right)$.  \\ \hline 
$\bar{\xi }_{j\to i} $  &  Decrement of a current wave function amplitude $\psi _{i\to j} $ for a backward direction, $j\to i$, $\bar{\xi }_{j\to i} =-\psi _{i\to j} $.  \\ \hline 
$\Gamma _{j\to i} $  &  A residual quantum wire length for $\bar{\xi }_{j\to i} $.  \\ \hline 
${\rm {\mathcal{N}}}$  &  Network of $QG$ quantum circuit.  \\ \hline 
${\rm {\mathcal{N}}}_{R} $  &  Residual network of $QG$ quantum circuit.  \\ \hline 
${\rm M}_{QG} $  &  Topological map of the network ${\rm {\mathcal{N}}}$.  \\ \hline 
$L_{c} $  &  Connection list of ${\rm {\mathcal{N}}}$ in $QG$.  \\ \hline 
$\bar{{\rm M}}_{QG} $  &  Topological map of the ${\rm {\mathcal{N}}}_{R} $ residual network of $QG$.  \\ \hline 
$\bar{L}_{c} $  &  Connection list of the ${\rm {\mathcal{N}}}_{R} $ residual network of $QG$.  \\ \hline 
$N_{\bar{C}} $  &  Number of $\bar{C}$ negative cycles in $\bar{{\rm M}}_{QG} $.  \\ \hline 
$sn_{k,i} $  &  An $i$-th subnet of a net $k$ of the $QG$ quantum circuit.  \\ \hline 
${\rm {\mathcal{A}}}_{BF} $  &  Bellman-Ford algorithm.  \\ \hline 
${\rm {\mathcal{A}}}_{K} $  &  Kruskal algorithm.  \\ \hline 
${\rm {\mathcal{T}}}_{QG} $  &  Minimum spanning tree.  \\ \hline 
$S_{{\rm {\mathcal{T}}}_{QG} } $  &  Set of quantum wires with $\delta _{ij} >0$.  \\ \hline 
$\delta _{0} $  &  Minimum width can be manufactured physically.  \\ \hline 
$d_{{\rm L1}} \left(\cdot \right)$  &  ${\rm L1}$-distance function.  \\ \hline 
$f_{c} \left(x,y\right)$  &  A cost function, for a quantum port pair $\left(x,y\right)\in G_{QG}^{k,r} $, defined as $f_{c} \left(x,y\right)=\gamma \left(x,y\right)+d_{{\rm L1}} \left(x,y\right)$, where $\gamma \left(x,y\right)$ is the real path size from $x$ to $y$ in the multilayer grid structure $G_{QG}^{k,r} $ of $QG$, while $d_{{\rm L1}} \left(x,y\right)$ is the ${\rm L1}$ distance in the grid structure.  \\ \hline 
$A^{{\rm *}} $  &  $A^{{\rm *}} $ search algorithm.  \\ \hline 
${\rm {\mathcal{P}}}^{{\rm *}} \left(x,y\right)$  &  A lowest cost path between quantum ports $\left(x,y\right)\in G_{QG}^{k,r} $.  \\ \hline
\end{longtable}
\end{center}

\begin{thebibliography}{10}
\bibitem{refpr} Preskill, J. Quantum Computing in the NISQ era and beyond, \textit{Quantum} 2, 79 (2018).

\bibitem{refha} Harrow, A. W. and Montanaro, A. Quantum Computational Supremacy, \textit{Nature}, vol 549, pages 203-209 (2017).

\bibitem{aar} Aaronson, S. and Chen, L. Complexity-theoretic foundations of quantum supremacy experiments. \textit{Proceedings of the 32nd Computational Complexity Conference}, CCC '17, pages 22:1-22:67, (2017).

\bibitem{ref1} Ofek, N. et al. Extending the lifetime of a quantum bit with error correction in superconducting circuits. \textit{Nature} 536, 441-445 (2016).

\bibitem{ref2} Debnath, S. et al. Demonstration of a small programmable quantum computer with atomic qubits. \textit{Nature} 536, 63-66 (2016).

\bibitem{ref3} Barends, R. et al. Superconducting quantum circuits at the surface code threshold for fault tolerance. \textit{Nature} 508, 500-503 (2014).

\bibitem{ref4} Monz, T. et al. Realization of a scalable Shor algorithm. \textit{Science} 351, 1068-1070 (2016).

\bibitem{ref5} DiCarlo, L. et al. Demonstration of two-qubit algorithms with a superconducting quantum processor. \textit{Nature} 460, 240-244 (2009).

\bibitem{ref6} Higgins, B. L., Berry, D. W., Bartlett, S. D., Wiseman, H. M. and Pryde, G. J. Entanglement-free Heisenberg-limited phase estimation. \textit{Nature} 450, 393-396 (2007).

\bibitem{ref8} Vandersypen, L. M. K. et al. Experimental realization of Shor's quantum factoring algorithm using nuclear magnetic resonance. \textit{Nature} 414, 883-887 (2001).

\bibitem{ref9} Gulde, S. et al. Implementation of the Deutsch-Jozsa algorithm on an ion-trap quantum computer. \textit{Nature} 421, 48-50 (2003).

\bibitem{ref10} IBM. \textit{A new way of thinking: The IBM quantum experience}. URL: http://www.research.ibm.com/quantum. (2017).

\bibitem{ref11} Gyongyosi, L., Imre, S. and Nguyen, H. V. A Survey on Quantum Channel Capacities, \textit{IEEE Communications Surveys and Tutorials} \textbf{99}, 1, doi: 10.1109/COMST.2017.2786748 (2018).

\bibitem{refsur} Gyongyosi, L. and Imre, S. A Survey on Quantum Computing Technology, \textit{Computer Science Review}, Elsevier, DOI: 10.1016/j. Cosrev.2018.11.002, ISSN: 1574-0137, (2018).

\bibitem{ref12} Biamonte, J. et al. Quantum Machine Learning. \textit{Nature}, 549, 195-202 (2017).

\bibitem{ref13} Lloyd, S., Mohseni, M. and Rebentrost, P. Quantum principal component analysis. \textit{Nature Physics}, 10, 631 (2014).

\bibitem{ref14} Sheng, Y. B., Zhou, L. Distributed secure quantum machine learning. \textit{Science}, 62, 1025-2019 (2017).

\bibitem{ref15} Kimble, H. J. The quantum Internet. \textit{Nature}, 453:1023-1030 (2008).

\bibitem{ref7} Farhi, E. and Neven, H. Classification with Quantum Neural Networks on Near Term Processors, \textit{arXiv:1802.06002v1} (2018).

\bibitem{ref17} Farhi, E., Goldstone, J., Gutmann, S. and Neven, H. Quantum Algorithms for Fixed Qubit Architectures. \textit{arXiv:1703.06199v1} (2017).

\bibitem{ref16} Farhi, E., Goldstone, J. and Gutmann, S. A Quantum Approximate Optimization Algorithm. \textit{arXiv:1411.4028.} (2014).

\bibitem{ref18} Farhi, E., Goldstone, J. and Gutmann, S. A Quantum Approximate Optimization Algorithm Applied to a Bounded Occurrence Constraint Problem. \textit{arXiv:1412.6062}. (2014).

\bibitem{ref19} Farhi, E. and Harrow, A. W. Quantum Supremacy through the Quantum Approximate Optimization Algorithm. \textit{arxiv:1602.07674} (2016).

\bibitem{ref20} Lloyd, S., Shapiro, J. H., Wong, F. N. C., Kumar, P., Shahriar, S. M. and Yuen, H. P. Infrastructure for the quantum Internet. \textit{ACM SIGCOMM Computer Communication Review}, 34, 9-20 (2004).

\bibitem{ref21} Van Meter, R. \textit{Quantum Networking}, John Wiley and Sons Ltd, ISBN 1118648927, 9781118648926 (2014).

\bibitem{ref22} Gyongyosi, L. and Imre, S. \textit{Advanced Quantum Communications - An Engineering Approach}. Wiley-IEEE Press (New Jersey, USA), (2012).

\bibitem{ref23} Lloyd, S. Mohseni, M. and Rebentrost, P. Quantum algorithms for supervised and unsupervised machine learning. arXiv:1307.0411 (2013).

\bibitem{p1} Pirandola, S., Laurenza, R., Ottaviani, C. and Banchi, L. Fundamental limits of repeaterless quantum communications, \textit{Nature Communications}, 15043, doi:10.1038/ncomms15043 (2017).

\bibitem{p2} Pirandola, S., Braunstein, S.L., Laurenza, R., Ottaviani, C., Cope, T.P.W., Spedalieri, G. and Banchi, L. Theory of channel simulation and bounds for private communication, \textit{Quantum Sci. Technol}. 3, 035009 (2018).

\bibitem{p3} Pirandola, S. Capacities of repeater-assisted quantum communications, \textit{arXiv:1601.00966} (2016).

\bibitem{ref24} Martins, R., Lourenco, N. and Horta, N. \textit{Analog Integrated Circuit Design Automation}, Springer, ISBN 978-3-319-34059-3, ISBN 978-3-319-34060-9 (2017).

\bibitem{ref25} Martins, R., Lourenco, N. and Horta, N. Multi-objective optimization of analog integrated circuit placement hierarchy in absolute coordinates. \textit{Expert Syst. Appl. }42(23), 9137--9151 (2015).

\bibitem{ref26} Martins, R., Povoa, R., Lourenco, N. and Horta, N. Current-flow \& current-density-aware multiobjective optimization of analog IC placement. \textit{Integr. VLSI J.} (2016).

\bibitem{com1} Deb, K., Pratap, A., Agarwal, S. and Meyarivan, T. A fast and elitist multiobjective genetic algorithm: NSGA-II, \textit{IEEE Trans. Evol. Comput.}, vol. 6, no. 2, pp. 182–197, (2002).

\bibitem{ref27} Bandyopadhyay, S., Saha, S., Maulik, U. and Deb, K. A simulated annealing-based multiobjective optimization algorithm: AMOSA, \textit{IEEE Trans. Evol. Comput.} 12(3), 269--283 (2008).

\bibitem{ref28} Suman, B. and Kumar, P. A survey of simulated annealing as a tool for single and multiobjective optimization. \textit{J. Oper. Res. Soc.} 57, 1143--1160 (2006).

\bibitem{ref29} Jiang, I., Chang, H. Y. and Chang, C. L. WiT: Optimal wiring topology for electromigration avoidance, \textit{IEEE Trans. Very Large Scale Integr. Syst. }20(4), 581--592 (2012). 

\bibitem{ref30} Rocha, F. A. E. R. M., Martins, F., Lourenco, N. C. C. and Horta, N. C. G. \textit{Electronic Design Automation of Analog ICs, Combining Gradient Models with Multi-Objective Evolutionary Algorithms}, Springer (2014).

\bibitem{ref31} Moore, G. E. Cramming more components onto integrated circuits. \textit{Electronics}. (1965).

\bibitem{refa1} Perkowski, M., Lukac, M., Kerntopf, P., Pivtoraiko, M., Folgheraiter, M., Choi, Y. W., Jung-wook, K., Lee, D., Hwangbo, W. and Kim, H. A Hierarchical Approach to Computer-Aided Design of Quantum Circuits. \textit{Electrical and Computer Engineering Faculty Publications and Presentations} 228. (2003). 

\bibitem{refa2} Bravyi, S., Browne, D., Calpin, P., Campbell, E., Gosset, D. and Howard, M. Simulation of quantum circuits by low-rank stabilizer decompositions, arXiv:1808.00128 (2018).

\bibitem{int} Munoz-Coreas, E. and Thapliyal, H. Quantum Circuit Design of A T-count Optimized Integer Multiplier, \textit{IEEE Transactions on Computers}, p 1-1, DOI: 10.1109/TC.2018.2882774 (2018). 

\bibitem{tc} Gosset, D., Kliuchnikov, V., Mosca, M. and Russo, V. An algorithm for the t-count, \textit{Quantum Information and Computation}, vol. 14, no. 15-16, pp. 1261–1276, (2014).

\bibitem{int2} Thapliyal, H., Munoz-Coreas, E., Varun, T. S. S. and Humble, T. S. Quantum Circuit Designs of Integer Division Optimizing T-count and T-depth, arXiv:1809.09732 (2018).

\bibitem{div} Jamal, L. and Babu, H. M. H. Efficient approaches to design a reversible floating point divider, in \textit{2013 IEEE International Symposium on Circuits and Systems} (ISCAS2013), pp. 3004–3007, (2013).

\bibitem{logic} Zhou, X., Leung, D. W. and Chuang, I. L. Methodology for quantum logic gate construction, \textit{Phys. Rev. A}, vol. 62, p. 052316 (2000). 

\bibitem{tel} Gottesman, D., Chuang, I. L. Quantum Teleportation is a Universal Computational Primitive, \textit{Nature} 402, 390-393 (1999).

\bibitem{depth} Amy, M., Maslov, D., Mosca, M. and Roetteler, M. A meet-in-the middle algorithm for fast synthesis of depth-optimal quantum circuits, \textit{IEEE Transactions on Computer-Aided Design of Integrated Circuits and Systems}, vol. 32, no. 6, pp. 818–830, (2013).

\bibitem{ft} Paler, A., Polian, I., Nemoto, K. and Devitt, S. J. Fault-tolerant, high level quantum circuits: form, compilation and description, \textit{Quantum Science and Technology}, vol. 2, no. 2, p. 025003, (2017). 

\bibitem{refa3} Brandao, F. G. S. L., Broughton, M., Farhi, E., Gutmann, S. and Neven, H. For Fixed Control Parameters the Quantum Approximate Optimization Algorithm's Objective Function Value Concentrates for Typical Instances, \textit{arXiv:1812.04170} (2018).

\bibitem{refa4} Zhou, L.,Wang, S.-T., Choi, S., Pichler, H. and Lukin, M. D. Quantum Approximate Optimization Algorithm: Performance, Mechanism, and Implementation on Near-Term Devices, arXiv:1812.01041 (2018).

\bibitem{refa5} Lechner, W. Quantum Approximate Optimization with Parallelizable Gates, \textit{arXiv:1802.01157v2} (2018).

\bibitem{refa6} Gavin E. Crooks, Performance of the Quantum Approximate Optimization Algorithm on the Maximum Cut Problem, \textit{arXiv:1811.08419} (2018).

\bibitem{refa7} Ho, W. W., Jonay, C. and Hsieh, T. H. Ultrafast State Preparation via the Quantum Approximate Optimization Algorithm with Long Range Interactions, \textit{arXiv:1810.04817} (2018).

\bibitem{song} Song, C et al. 10-Qubit Entanglement and Parallel Logic Operations with a Superconducting Circuit, \textit{Physical Review Letters}, vol. 119, no. 18, p. 180511 (2017).

\bibitem{refgw} Goemans, M. X. and Williamson, D. P. Improved approximation algorithms for maximum cut and satisfiability problems using semidefinite programming, \textit{J. ACM} 42, 1115 (1995).


\end{thebibliography}
\end{document}